\definecolor{myurlcolor}{rgb}{0,0,0.7}
\newcommand{\cC}{\mathcal{C}}
\newcommand{\cD}{\mathcal{D}}
\newcommand{\cH}{\mathcal{H}}
\newcommand{\cI}{\mathcal{I}}
\newcommand{\cJ}{\mathcal{J}}
\newcommand{\cK}{\mathcal{K}}
\newcommand{\cL}{\mathcal{L}}
\newcommand{\cM}{\mathcal{M}}
\newcommand{\cN}{\mathcal{N}}
\newcommand{\cP}{\mathcal{P}}
\newcommand{\cS}{\mathcal{S}}
\newcommand{\Id}{\mathbbm{1}}
\newcommand{\tr}{\text{Tr}}
\newtheorem{theorem}{Theorem}
\newtheorem{lemma}{Lemma}
\newtheorem{corollary}[theorem]{Corollary}
\newtheorem{definition}{Definition}
\begin{document}

\title{Relating CP divisibility of dynamical maps with compatibility of channels}
\author{Arindam Mitra$^{1,2,3,4}$}
\email{20003292@iitb.ac.in}
\email{arindammitra143@gmail.com}

\author{ Debashis Saha$^{5}$}
\email{saha@iisertvm.ac.in}

\author{ Samyadeb Bhattacharya$^{6,7}$}
\email{samyadeb.b@iiit.ac.in}

\author{ A. S. Majumdar$^{8}$}
\email{archan@bose.res.in}

\affiliation{$^1$Department of Physics, Indian Institute of Technology Bombay, Mumbai 400076, India.\\
$^2$Centre of Excellence in Quantum Information, Computation, Science, and Technology, Indian Institute of Technology Bombay, Powai, Mumbai 400076, India\\
$^3$Optics and Quantum Information Group, The Institute of Mathematical Sciences, C. I. T. Campus, Taramani, Chennai 600113, India.\\
$^4$Homi Bhabha National Institute, Training School Complex, Anushaktinagar, Mumbai 400094, India.}

\affiliation{$^5$School of Physics, Indian Institute of Science Education and Research Thiruvananthapuram, Kerala 695551, India
}

\affiliation{$^{6}$Centre for Quantum Science and Technology, International Institute of Information Technology-Hyderabad, Gachibowli, Hyderabad-500032, Telangana, India.\\
$^7$Center for Security, Theory and Algorithmic Research, International Institute of Information Technology-Hyderabad, Gachibowli, Hyderabad-500032, Telangana, India.}

\affiliation{$^{8}$S. N. Bose National Centre for Basic Sciences, Block JD, Sector III, Salt Lake, Kolkata 700 106, India}

\date{\today}

\begin{abstract}
 The role of CP-indivisibility and incompatibility as valuable resources for various information-theoretic tasks is widely acknowledged. This study delves into the intricate relationship between CP-divisibility and channel compatibility. Our investigation focuses on the behaviour of incompatibility robustness of quantum channels for a pair of generic dynamical maps. We show that the incompatibility robustness of channels is monotonically non-increasing for a pair of generic CP-divisible dynamical maps. Further, 
 our explicit study of the behaviour of incompatibility robustness with time for some specific dynamical maps reveals non-monotonic behaviour in the CP-indivisible regime. Additionally, we propose a measure of  CP-indivisibility based on the incompatibility robustness of quantum channels. Our investigation provides valuable insights into the nature of quantum dynamical maps and their relevance in information-theoretic applications. 
\end{abstract}

\maketitle
\section{Introduction}
Incompatibility is one of the main features of quantum mechanics that makes it different from classical mechanics \cite{Heino-review}. A set of devices is said to be compatible if those devices can be simultaneously implemented on a quantum system. Otherwise, the set is incompatible. These devices can be measurements, channels, instruments etc. Incompatibility is a resource in several information theoretic tasks and is necessary to demonstrate non-classical advantage in such tasks. For example, measurement incompatibility is necessary and sufficient to demonstrate quantum steering \cite{Brunner-comp-bell}.  Incompatibility of measurements is also necessary to demonstrate Bell inequality violation \cite{Barnett-comp-bell,Brunner-comp-bell} and any quantum advantage in communication tasks \cite{Saha2023}.
Measurement incompatibility also provides advantage in some state discrimination tasks \cite{Skrzypczyk-comp-state-disc}. Recently, it has been shown that incompatibility of channels and measurement-channel incompatibility both provide advantages in quantum state discrimination tasks \cite{Mori-comp-chan-state-disc}. 

A general quantum evolution is described by a completely positive trace preserving (CPTP) dynamical map \cite{breuer,alicki,lindblad,gorini,RHPreview,BLPreview,Vegareview}. This representation has wide application, since it is almost impossible to keep a quantum system truly isolated. When considering a Markovian evolution \citep{RHPreview,BLPreview}, it is necessary to take the quantum system weakly interacting with much larger stationary environment, and hence, the reduced dynamics of the system can be considered to be memoryless, leading to one-way information flow from the quantum system to the bath degrees of freedom. Therefore, the quantum features of a system subjected to such dynamics vanishes gradually with time \cite{RHPreview,BLPreview,Vegareview}. However, in practical situations like in an experiment, the coupling between the system and environmental degrees of freedom may not always be sufficiently weak. Moreover, the concerning environment can very well be finite or non-stationary. These situations may lead to the signature of non-Markovian information backflow \cite{RHP,blp1,bellomo,arend,ban1,ban2,Bhattacharya17,samya2,Bhattacharya20,Maity20,BBhattacharya21}. Though  quantum non-markovianity has been associated with varied physical attributes \citep{modi1,modi2,RHP,blp1}, the focus of this work is based solely on indivisibility of the dynamics exhibiting information backflow from the environment to the system \citep{RHP}. 

A divisible quantum operation is the one, that can be realised as arbitrary number of completely positive trace preserving (CPTP) maps. In other words, such operations can be divided into arbitrary number of CPTP maps. The precise mathematical definition of such maps is later presented in equation \eqref{Eq:def-divi} for better understanding.
Divisible maps do not exhibit information backflow from environment to the system \citep{blp1,RHP}, and hence, can be understood as Markovian operations. 
The Born-Markov approximation and stationary bath state approximation are imperative to realize such quantum operations, and hence, in the absence of these initial approximations, the dynamics is bound to be CP-indivisible
and  prone to show information backflow \citep{RHPreview}. Adopting this line of reasoning, in this work, we take CP-indivisbile quantum operations as non-Markovian. Note that CP-indivisble operations  are necessary to have information backflow from the environment to the system enabling recovery
of lost information to an extent, and hence, they can be considered resourceful operations in information processing scenarios.  For example, it has been shown that information backflow allows perfect teleportation with mixed states \citep{resource1}, improvement of capacity for long quantum channels \citep{resource2} and efficient work extraction from an Otto cycle \citep{resource3}.

From the above discussions, it is clear that (in-) compatibility and CP-(in-) divisibility can play the role of resources in various information-theoretic and thermodynamic tasks. 
There are several resources in quantum theory that provide advantage in  information-theoretic or thermodynamic tasks \cite{Chitambar-review-resource}. Entanglement 
 \cite{RevModPhys.81.865}, coherence \cite{PhysRevA.107.012221}, non-locality \cite{PhysRevLett.97.120405}, contextuality \cite{howard2014contextuality,RevModPhys.94.045007}, and incompatibility \cite{PhysRevLett.124.120401} are examples of some widely studied resources. Evidently therefore, exploring the interplay amongst different resources forms an important avenue of research. For example, it is well-known that coherence can be measured with entanglement \cite{Adesso-coh-entang}.
 The relation between incompatibility and Bell non-locality \cite{Barnett-comp-bell,Brunner-comp-bell} as well as the relation between incompatibility and steerability \cite{Brunner-comp-bell} are well-known. Furthermore, it has been recently shown that superposition and entanglement are equivalent concepts in any physical theory \cite{Lami-sup-entang}. 
 
 The motivation for the 
 present work is
to explore the connection between the two resources of (in-)compatibility and CP-(in-)divisibility, which have been hitherto investigated separately
in the literature. It is known that both incompatibility and CP-indivisibility are resources for different information-theoretic tasks.
Moreover, to the best of our knowledge,  the notion of compatibility has been considered only for devices in the static backdrop. 
In Ref.\cite{Duarte2022}, the authors have done a qualitative study regarding the relation between CP-divisibility and
incompatibility. However, they didn’t consider dynamical maps that involve time, and neither consider the robustness measure
of incompatibility to draw quantitative connection between the above-said resources.   On the other hand, 
in this work we introduce and characterize the notion of compatibility of dynamical maps, incorporating their evolution in time.
 Through our present analysis, we  characterize CP-(in-)divisibility with respect to the (in-)compatibility of channels. We study the behaviour of incompatibility robustness of quantum channels for some examples of dynamical maps. We further present an example where
    the non-Markovian advantange manifested in terms of CP-indivisibility and information backflow is clearly seen to act as a quantum resource
    in the task of teleportation.  Moreover, we define a measure of CP-indivisibility based on incompatibility robustness of channels. 

The rest of the paper is organized as follows.  In Sec. \ref{sec:prelims}, we provide definitions of various quantities required for the subsequent analysis. Our main results are presented from Sec. \ref{sec:main} onwards.
In Sec. \ref{sec:cp-div-comp-chan}, we show that the incompatibility robustness of quantum channels for a pair of CP-divisible dynamical maps is monotonically non-increasing with respect to time. In Sec. \ref{sec:rel-with-comp-meas}, we show that for any pair of dynamical maps, incompatibility robustness of measurements is upper bounded by incompatibility of channels for an arbitrary time. In Sec \ref{sec:comp-DM}, we discuss the notion of compatibility of dynamical maps and its connection to channel compatibility. In Sec. \ref{sec:Behav-RoI-chan-examp-DM}, we study the behaviour of incompatibility robustness of quantum channels for certain specific dynamical maps and show its non-monotonic behaviour in a CP-indivisible case.  In Sec. \ref{sec:telport-cp-in-divisible-incomp}, we discuss the usefulness of CP-indivisibility in the context of quantum teleportation and compare the behaviour of teleportation fidelity with the incompatibility robustness w.r.t. time. In Sec. \ref{sec:cp-in-meas}, we propose a measure of CP-indivisibility based on incompatibility of channels. Finally, in Sec. \ref{sec:conc}, we 
present our concluding remarks.

\section{Preliminaries}\label{sec:prelims}

\subsection{Compatibility of measurements}
A measurement $M$ acting on the Hilbert space $\cH$ is a set of positive semi-definite matrices i.e., $M=\{M(x)\}_{x\in\Omega_M}$ such that $\sum_{x\in\Omega_M}M(x)=\Id_{\cH}$ where $\Id_{\cH}$ is the identity matrix on the Hilbert space $\cH$ and $\Omega_M$ is known as the outcome set of $M$. The set of all measurements acting on Hilbert space $\cH$ and with outcome set $\Omega$ is denoted by $\mathbbm{M}(\cH,\Omega)$. A set of measurements $\cM=\{M_i\}^n_{i=1}$ is said to be compatible if there exists a joint measurement $M=\{M(j_1,\ldots, j_n)\}\in\mathbbm{M}(\cH,\Omega_M)$ with $\Omega_M=\Omega_{M_1}\times\ldots\times\Omega_{M_n}$ such that $M_i(j_i)=\sum_{\{j_k\}\setminus j_i}M(j_1,\ldots, j_n)$ for all $j_i\in\Omega_{M_i}$ for all $i\in\{1,\ldots n\}$ where sum over $\{j_k\}\setminus j_i$ denotes the sum over all $j_k$s for all $k$s except for $k=i$. Otherwise, the set is incompatible \cite{Heino-review, kiukas2023joint}.

A  measure of incompatibility of quantum measurements is the incompatibility robustness of quantum measurements,  defined below. The incompatibility robustness of two quantum measurements $M_1\in\mathbbm{M}(\cH,\Omega_1)$ and $M_2\in\mathbbm{M}(\cH,\Omega_2)$ can be defined as:
\begin{equation}\label{Def:ROI-channels}
\begin{split}
R_{M}(M_1,M_2)=\min~  & \left. r\right. \\
\text{s.t.} ~ & \left. \frac{M_1(i_1)+r\tilde{M}_1(i_1)}{1+r}=\sum_{i_2}M(i_1.i_2)\right. \\
& \left. \frac{M_2(i_2)+r\tilde{M}_2(i_2)}{1+r}=\sum_{i_1}M(i_1.i_2)\right. \\
& \left. M \in\mathbbm{M}(\cH,\Omega_1\times\Omega_2)\right. \\
& \left. \tilde{M}_i\in\mathbbm{M}(\cH,\Omega_i) \quad i = 1,2. \right.
\end{split}
\end{equation}
Here, $\tilde{M}_i$s are arbitrary noise measurements and the optmization is over all variables, other than the given pair of measurements $(M_1,M_2)$. We call the set of all values of $r$ that satisfies the above equalities for different noise measurements, \emph{the compatibility range}. Clearly, the incompatibility robustness is the minimum of all values of $r$ that belongs to the compatibility range.

\subsection{Compatibility of quantum channels}\label{Prelim:comp-chan}
A quantum channel $\Gamma:\cL(\cH)\rightarrow\cL(\cK)$ is a CPTP linear map where $\cL(\cH)$ is the bounded linear operator on the Hilbert space $\cH$ and $\cL(\cK)$ is the bounded linear operator on the Hilbert space $\cK$. We denote the set of all quantum channels from $\cL(\cH)$ to $\cL(\cK)$ as $\mathbbm{Ch}(\cH,\cK)$. We also denote the composition (also known as concatenation) of two quantum channels $\Phi_1:\cL(\cH)\rightarrow\cL(\bar{\cK})$ and $\Phi_2:\cL(\bar{\cK})\rightarrow\cL(\cK)$ as $\Phi_2\circ\Phi_1$ where $\bar{K}$ is another Hilbert space. For two quantum channels $\Gamma_1\in\mathbbm{Ch}(\cH,\cK_1)$ and $\Gamma_2\in\mathbbm{Ch}(\cH,\cK_2)$, if there exists another quantum channel $\Theta\in\mathbbm{Ch}(\cK_1,\cK_2)$ such that $\Gamma_2=(\Theta\circ\Gamma_1)$ then we say that $\Gamma_2$ is a post-processing of $\Gamma_1$ and we denote it as $\Gamma_2\preceq\Gamma_1$. Let $\Phi\in\mathbbm{Ch}(\cH,\cK)$ be a quantum channel. Then the $\Phi^*:\cL(\cK)\rightarrow\cL(\cH)$ is called as a dual map of $\Phi$ if $\tr[\Phi(T)X]=\tr[T\Phi^*(X)]$ holds for all $T\in\cL(\cH)$ and $X\in\cL(\cK)$. Clearly, $\Phi^*$ is the action of $\Phi$ in Heisenberg picture. As, $\Phi$ is CP trace preserving, $\Phi^*$ is CP unital. Now, consider a measurement $M=\{M(x)\}_{x\in\Omega_M}\in\mathbbm{M}(\Omega_M, \cK)$ and a channel $\Lambda\in\mathbbm{Ch}(\cH,\cK)$. If $\Lambda^*$ is applied on the measurement $M$, the resulting measurement is $\Lambda^*(M)=\{\Lambda^*(M(x))\}_{x\in\Omega_M}\in\mathbbm{M}(\Omega_M, \cH)$. Implementation of $M$ on an arbitrary quantum system after implementation of $\Lambda$, is equivalent to implementation of $\Lambda^*(M(x))$ before implementation of $\Lambda$ on that quantum system. 

We now discuss a special type of channel that maps any input state to a fixed output state. These channels are called completely depolarising (CD) channels \cite{Heino-incomp-chan} that may completely erase the information of input states.   If $\Upsilon_{\eta}(\rho)=\eta$ for all input states $\rho$ then $\Upsilon_{\eta}$ is a completely depolarising channel. We denote the set of all completely depolarising channels from $\cL(\cH)$ to $\cL(\cK)$ as $\mathbbm{Ch}^{\cC\cD}(\cH,\cK)$. We will use this type of channels in the later sections. Now, suppose $\Upsilon_{\eta}\in\mathbbm{Ch}^{\cC\cD}(\cH,\cK)$ is an arbitrary completely depolarising channel and $\Lambda_1\in\mathbbm{Ch}(\cK,\cK^{\prime})$ is an arbitrary quantum channel. Then $(\Lambda_1\circ\Upsilon_{\eta})$ is also a completely depolarising channel \cite{Heino-incomp-chan}.

Below, we define the compatibility of quantum channels.
\begin{definition}
Two quantum channels $\Lambda_1:\cL(\cH)\rightarrow\cL(\cK_1)$ and $\Lambda_2:\cL(\cH)\rightarrow\cL(\cK_2)$ are compatible if there exists a quantum channel  $\Lambda:\cL(\cH)\rightarrow\cL(\cK_1\otimes\cK_2)$ such that for all $T\in\cL(\cH)$ 
\begin{equation}
\Lambda_1(T)=\tr_{\cK_2}\Lambda(T);~\Lambda_2(T)=\tr_{\cK_1}\Lambda(T)\label{Eq:chan-comp}. 
\end{equation}
Otherwise, $\Lambda_1$ and $\Lambda_2$ are incompatible \cite{Heino-incomp-chan} \label{Def:Comp-QC}.
\end{definition}
The quantum channel $\Lambda$ in Definition \ref{Def:Comp-QC} is also known as the joint quantum channel. Equation \ref{Eq:chan-comp} can be re-written using short-hand notation as 
\begin{equation}
\Lambda_1=\tr_{\cK_2}\Lambda;~\Lambda_2=\tr_{\cK_1}\Lambda. 
\end{equation}
We will use the short-hand notations throughout the paper. 

Suppose $\bar{\Gamma}_1\preceq\Gamma_1$ and $\bar{\Gamma}_2\preceq\Gamma_2$. Then it is proved in Ref. \cite[Proposition 3]{Heino-incomp-chan} that $\bar{\Gamma}_1$ and $\bar{\Gamma}_2$ are compatible if $\Gamma_1$ and $\Gamma_2$ are compatible. We will use this result in the proof of Theorem \ref{Th:rel-comp-chan-CP-div-DM} and Theorem \ref{Th:rel-comp-chan-CP-div-DM-fix-noise}.

A  measure of incompatibility of quantum channels is the incompatibility robustness of quantum channels that is defined below \cite{Mori-comp-chan-state-disc}. The incompatibility robustness of two quantum channels $\Phi_1: \cL(\cH) \to \cL(\cK_1)$ and $\Phi_2: \cL(\cH) \to \cL(\cK_2)$ can be defined as:
\begin{equation}\label{Def:ROI-channels}
\begin{split}
R_{C}(\Phi_1,\Phi_2)=\min~  & \left. r\right. \\
\text{s.t.} ~ & \left. \frac{\Phi_1+r\tilde{\Phi}_1}{1+r}=\tr_{\cK_2}{\Psi}\right. \\
& \left. \frac{\Phi_2+r\tilde{\Phi}_2}{1+r}=\tr_{\cK_1}{\Psi}\right. \\
& \left. \Psi \in\mathbbm{Ch}(\cH,\cK_1 \otimes \cK_2)\right. \\
& \left. \tilde{\Phi}_i\in\mathbbm{Ch}(\cH,\cK_i) \quad i = 1,2. \right.
\end{split}
\end{equation}

Here, $\tilde{\Phi}_i$s are arbitrary noise channels, and the optimization is over all variables, other than the given pair of channels $(\Phi_1,\Phi_2)$. We call the set of all values of $r$ that satisfies the above equalities, \emph{the compatibility range}. Clearly, the incompatibility robustness is the minimum of all values of $r$ that belongs to the compatibility range. Note that the definition of incompatibility robustness of quantum channels does not directly guarantee that the compatibility range is continuous. Such a statement is proved in Lemma \ref{App-lemma} below. It is known that for any two given channels, $R_C$ is upper bounded by $1$ \cite[Example 2]{Heino-incomp-chan}. Now, following Ref.\cite{Heino-layers},  broadcasting quantum channels can
be defined.

\begin{definition}
A channel $\Gamma:\cL(\cH)\rightarrow\cL(\cH_1\otimes\cH_2)$  with $\cH_1=\cH_2$ is known as a broadcasting quantum channel \cite{Heino-layers}.
\end{definition}
This definition will be used in later sections.
\subsection{Dynamical maps and CP-divisibility}

A dynamical map is a family of CPTP linear maps $\{\Lambda_{t,t_0}:\cL(\cH)\rightarrow\cL(\cH)\}_t$ (where $t\geq t_0$). Here, $t$ represents the time and $t_0$ is the fixed initial time. Without loss of generality, we can take $t_0$ to be $0$ and denote $\Lambda_{t,0}$ as $\Lambda_t$. We denote the set of all dynamical maps on $\cL(\cH)$ (i.e., from $\cL(\cH)$ to $\cL(\cH)$) as $\mathbbm{DM}(\cH,\cH)$. Let $\cD_1=\{\Lambda^1_t\}_t\in\mathbbm{DM}(\cH,\cH)$ and $\cD_2=\{\Lambda^2_t\}_t\in\mathbbm{DM}(\cH,\cH)$. Then the convex combination of $\cD_1$ and $\cD_2$ (w.r.t. $p\geq 0$) is defined as $p\cD_1+(1-p)\cD_2:=\{p\Lambda^1_t+(1-p)\Lambda^2_t\}_t$. Now, we provide the definition of a CP-divisible dynamical map below.
\begin{definition}
A dynamical map is $\cD=\{\Lambda_t\}_t$ is called CP divisible if for all $t$ and all $s$, it can be written as 
\begin{equation}
\Lambda_t=V_{t,s}\circ\Lambda_s,~(t\geq s)\label{Eq:def-divi}
\end{equation}
 where $\circ$ denotes the composition of maps and $V_{t,s}:\cL(\cH)\rightarrow\cL(\cH)$ is a CPTP linear map.
\end{definition}

A dynamical map that is not CP-divisible is known as CP-indivisible dynamical map. We refer the readers to Refs.\cite{BLPreview,RHPreview,Vegareview} for more details.

\section{Relating CP-divisibility of dynamical maps with compatibility of quantum channels}\label{sec:main}


\subsection{CP-indivisibility of dynamical maps and incompatibility robustness of quantum channels} \label{sec:cp-div-comp-chan}
In this subsection, we establish a connection of CP-divisibility of dynamical maps with compatibility of quantum channels.

\begin{theorem}
Suppose that the quantum dynamical maps $\cD_1=\{\Lambda^1_t\}_t\in\mathbbm{DM}(\cH,\cH)$ and $\cD_2=\{\Lambda^2_t\}_t\in\mathbbm{DM}(\cH,\cH)$ are both CP-divisible. Then $R_C(\Lambda^1_t,\Lambda^2_t)\geq R_C(\Lambda^1_{t+\delta{t}},\Lambda^2_{t+\delta{t}})$ for any $\delta{t}\geq 0$. \label{Th:rel-comp-chan-CP-div-DM}
\end{theorem}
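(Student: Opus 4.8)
The plan is to exploit the fact that CP-divisibility makes each channel at a later time a post-processing of the channel at an earlier time, and then to invoke the monotonicity of compatibility under post-processing recorded in \cite[Proposition 3]{Heino-incomp-chan}. Concretely, since $\cD_1$ and $\cD_2$ are CP-divisible and $\delta t\ge 0$, applying the divisibility relation \eqref{Eq:def-divi} with earlier time $t$ and later time $t+\delta t$ yields CPTP maps $V^1:=V^1_{t+\delta t,t}$ and $V^2:=V^2_{t+\delta t,t}$ on $\cL(\cH)$ with
\begin{equation}
\Lambda^1_{t+\delta t}=V^1\circ\Lambda^1_t,\qquad \Lambda^2_{t+\delta t}=V^2\circ\Lambda^2_t .
\end{equation}
Thus each $\Lambda^i_{t+\delta t}$ is a post-processing of $\Lambda^i_t$, i.e.\ $\Lambda^i_{t+\delta t}\preceq\Lambda^i_t$.

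Next I would show that every value $r$ lying in the compatibility range of $(\Lambda^1_t,\Lambda^2_t)$ also lies in the compatibility range of $(\Lambda^1_{t+\delta t},\Lambda^2_{t+\delta t})$. Fix such an $r$: by the definition of the robustness there are noise channels $\tilde\Lambda^1_t,\tilde\Lambda^2_t\in\mathbbm{Ch}(\cH,\cH)$ and a joint channel $\Psi_t\in\mathbbm{Ch}(\cH,\cH\otimes\cH)$ whose two marginals equal the noisy channels $\Phi^i:=(\Lambda^i_t+r\,\tilde\Lambda^i_t)/(1+r)$; in particular $\Phi^1$ and $\Phi^2$ are compatible. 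Post-composing with the CPTP maps $V^i$ produces $\bar\Phi^i:=V^i\circ\Phi^i$, so that $\bar\Phi^i\preceq\Phi^i$. Proposition~3 of \cite{Heino-incomp-chan} then guarantees that $\bar\Phi^1$ and $\bar\Phi^2$ are compatible.

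The key step is to recognise $\bar\Phi^i$ as a legitimate noisy version of $\Lambda^i_{t+\delta t}$. By linearity of composition,
\begin{equation}
\bar\Phi^i=V^i\circ\frac{\Lambda^i_t+r\,\tilde\Lambda^i_t}{1+r}=\frac{\Lambda^i_{t+\delta t}+r\,(V^i\circ\tilde\Lambda^i_t)}{1+r},
\end{equation}
and since $\tilde\Lambda^i_{t+\delta t}:=V^i\circ\tilde\Lambda^i_t$ is again CPTP (a composition of CPTP maps), the compatible pair $(\bar\Phi^1,\bar\Phi^2)$ is exactly a noisy pair certifying that $r$ belongs to the compatibility range of $(\Lambda^1_{t+\delta t},\Lambda^2_{t+\delta t})$. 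Hence the compatibility range at time $t$ is contained in that at time $t+\delta t$, and taking the minimum over each range (which is well-defined by Lemma~\ref{App-lemma}) gives $R_C(\Lambda^1_{t+\delta t},\Lambda^2_{t+\delta t})\le R_C(\Lambda^1_t,\Lambda^2_t)$, as claimed.

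I expect the main obstacle to be the bookkeeping in the second and third steps: one must verify that the post-processed objects $\bar\Phi^i$ retain the precise convex-combination structure demanded by the definition of $R_C$ (so that the \emph{same} $r$ is genuinely certified at time $t+\delta t$), and that CP-divisibility of \emph{both} maps is essential, since if only one of $\cD_1,\cD_2$ were divisible the post-processing argument would fail for the other marginal. A secondary point is to justify, via the continuity of the compatibility range established in Lemma~\ref{App-lemma}, that the inclusion of ranges translates into the stated inequality between their minima.
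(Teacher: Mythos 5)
Your proposal is correct and follows essentially the same route as the paper's proof: decompose $\Lambda^i_{t+\delta t}=V^i\circ\Lambda^i_t$ via CP-divisibility, push the noisy compatible pair through the intermediate maps $V^i$, observe that the convex-combination structure with the same $r$ is preserved, and invoke Proposition~3 of \cite{Heino-incomp-chan} on post-processing. The only cosmetic difference is that you argue containment of the entire compatibility range before taking minima, whereas the paper works directly with the optimal value $l=R_C(\Lambda^1_t,\Lambda^2_t)$; the two are equivalent.
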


\begin{proof}
If both $\cD_1$ and $\cD_2$ are CP divisible, then
\begin{align}
\Lambda^1_{t+\delta{t}}&=V^1_{t,t+\delta{t}}\circ\Lambda^1_t;\\
\Lambda^2_{t+\delta{t}}&=V^2_{t,t+\delta{t}}\circ\Lambda^2_t
\end{align}
hold for any $\delta{t}\geq 0$, where $V^1_{t,t+\delta{t}}$ and $V^2_{t,t+\delta{t}}$ are quantum channels.
Now, suppose that $R_C(\Lambda^1_t,\Lambda^2_t)=l$. Therefore, from the definition of incompatibility robustness of quantum channels, it follows that there exist two quantum channels $\bar{\Lambda}^1_t$ and $\bar{\Lambda}^2_t$ such that the channels $\Sigma^1_t=\frac{\Lambda^1_t+l\bar{\Lambda}^1_t}{1+l}$ and $\Sigma^2_t=\frac{\Lambda^2_t+l\bar{\Lambda}^2_t}{1+l}$ are compatible. Now, consider the quantum channels,

\begin{align}
V^1_{t,t+\delta{t}}\circ\Sigma^1_t&=\frac{\Lambda^1_{t+\delta{t}}+l(V^1_{t,t+\delta{t}}\circ\bar{\Lambda}^1_t)}{1+l}\label{Eq:1-Rt>Rt+dt}\\
V^2_{t,t+\delta{t}}\circ\Sigma^2_t&=\frac{\Lambda^2_{t+\delta{t}}+l(V^2_{t,t+\delta{t}}\circ\bar{\Lambda}^2_t)}{1+l}\label{Eq:2-Rt>Rt+dt}.
\end{align}
Next, as mentioned in Sec. \ref{Prelim:comp-chan}, for two pairs of channels $(\Gamma_1, \Gamma_2)$ and $(\bar{\Gamma}_1, \bar{\Gamma}_2)$, if $\bar{\Gamma}_1\preceq\Gamma_1$ and $\bar{\Gamma}_2\preceq\Gamma_2$ hold, then $\bar{\Gamma}_1$ and $\bar{\Gamma}_2$ are compatible if $\Gamma_1$ and $\Gamma_2$ are compatible \cite[Proposition 3]{Heino-incomp-chan}. Therefore, as $\Sigma^1_t$ and $\Sigma^2_t$ are compatible, from Eq. \eqref{Eq:1-Rt>Rt+dt} and Eq. \eqref{Eq:2-Rt>Rt+dt}, we get that the quantum channels $V^1_{t,t+\delta{t}}\circ\Sigma^1_t$ and $V^2_{t,t+\delta{t}}\circ\Sigma^2_t$ are also compatible. Therefore, from the definition of incompatibility robustness of quantum channels, it follows that $R_C(\Lambda^1_t,\Lambda^2_t)=l\geq R_C(\Lambda^1_{t+\delta{t}},\Lambda^2_{t+\delta{t}})$.
\end{proof}

Note that the incompatibility robustness defined in Eq. \eqref{Def:ROI-channels}, has the minimization over all possible noise channels. Incompatibility robustness can be defined w.r.t. only the set of all completely depolarising channels, as well. The incompatibility robustness of two quantum channels $\Phi_1: \cL(\cH) \to \cL(\cK_1)$ and $\Phi_2: \cL(\cH) \to \cL(\cK_2)$ w.r.t. completely depolarising channels can be defined as:
\begin{equation}\label{Def:ROI-channels-cd-noise}
\begin{split}
R^{\cC\cD}_{C}(\Phi_1,\Phi_2)=\min~  & \left. r\right. \\
\text{s.t.} ~ & \left. \frac{\Phi_1+r\Theta_1}{1+r}=\tr_{\cK_2}{\Psi}\right. \\
& \left. \frac{\Phi_2+r\Theta_2}{1+r}=\tr_{\cK_1}{\Psi}\right. \\
& \left. \Psi \in\mathbbm{Ch}(\cH,\cK_1 \otimes \cK_2)\right. \\
& \left. \Theta_i\in\mathbbm{Ch}^{\cC\cD}(\cH,\cK_i) \quad i = 1,2. \right.
\end{split}
\end{equation}

A similar measure for the incompatibility of measurements has been  studied
earlier \cite{Heino-review,Heino-open}. 

\begin{theorem}
Suppose that the quantum dynamical maps $\cD_1=\{\Lambda^1_t\}_t\in\mathbbm{DM}(\cH,\cH)$ and $\cD_2=\{\Lambda^2_t\}_t\in\mathbbm{DM}(\cH,\cH)$ are both CP-divisible. Then $R^{\cC\cD}_C(\Lambda^1_t,\Lambda^2_t)\geq R^{\cC\cD}_C(\Lambda^1_{t+\delta{t}},\Lambda^2_{t+\delta{t}})$ for any $\delta{t}\geq 0$. \label{Th:rel-comp-chan-CP-div-DM-fix-noise}
\end{theorem}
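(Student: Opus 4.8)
The plan is to replay the proof of Theorem~\ref{Th:rel-comp-chan-CP-div-DM} almost verbatim, the single extra ingredient being that completely depolarising noise is stable under evolution in time. Since $\cD_1$ and $\cD_2$ are CP-divisible, I first write $\Lambda^1_{t+\delta t}=V^1_{t,t+\delta t}\circ\Lambda^1_t$ and $\Lambda^2_{t+\delta t}=V^2_{t,t+\delta t}\circ\Lambda^2_t$ with $V^i_{t,t+\delta t}$ CPTP. Setting $l=R^{\cC\cD}_C(\Lambda^1_t,\Lambda^2_t)$, the definition in Eq.~\eqref{Def:ROI-channels-cd-noise} supplies completely depolarising channels $\Theta_1,\Theta_2\in\mathbbm{Ch}^{\cC\cD}(\cH,\cH)$ for which $\Sigma^1_t=\frac{\Lambda^1_t+l\Theta_1}{1+l}$ and $\Sigma^2_t=\frac{\Lambda^2_t+l\Theta_2}{1+l}$ are compatible.

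Next I would post-compose each $\Sigma^i_t$ with the corresponding divisibility channel and use linearity of composition to obtain
\[
V^i_{t,t+\delta t}\circ\Sigma^i_t=\frac{\Lambda^i_{t+\delta t}+l\,\bigl(V^i_{t,t+\delta t}\circ\Theta_i\bigr)}{1+l},\qquad i=1,2 .
\]
Because $V^i_{t,t+\delta t}\circ\Sigma^i_t$ is a post-processing of $\Sigma^i_t$, the post-processing stability of compatibility invoked in the proof of Theorem~\ref{Th:rel-comp-chan-CP-div-DM}~\cite[Proposition~3]{Heino-incomp-chan} shows that $V^1_{t,t+\delta t}\circ\Sigma^1_t$ and $V^2_{t,t+\delta t}\circ\Sigma^2_t$ remain compatible.

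The one step that is genuinely new relative to Theorem~\ref{Th:rel-comp-chan-CP-div-DM}, and hence the crux of the argument, is verifying that the transported noise channels $V^i_{t,t+\delta t}\circ\Theta_i$ are again completely depolarising, so that the decomposition above is of the admissible form demanded by Eq.~\eqref{Def:ROI-channels-cd-noise}. This is exactly the closure property recorded in Sec.~\ref{Prelim:comp-chan}: the composition of any quantum channel with a completely depolarising channel is itself completely depolarising. Granting this, $V^i_{t,t+\delta t}\circ\Theta_i\in\mathbbm{Ch}^{\cC\cD}(\cH,\cH)$, so $l$ lies in the compatibility range of $(\Lambda^1_{t+\delta t},\Lambda^2_{t+\delta t})$ relative to completely depolarising noise, and minimizing over that range yields $R^{\cC\cD}_C(\Lambda^1_{t+\delta t},\Lambda^2_{t+\delta t})\le l=R^{\cC\cD}_C(\Lambda^1_t,\Lambda^2_t)$. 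Aside from this closure check I do not anticipate any real difficulty, as the convex-combination algebra and the compatibility-preservation argument carry over unchanged from the previous theorem.
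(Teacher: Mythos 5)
Your proposal is correct and follows essentially the same route as the paper's own proof: CP-divisibility supplies the intertwining channels, Proposition~3 of~\cite{Heino-incomp-chan} preserves compatibility under post-processing, and the closure property $(\Lambda_1\circ\Upsilon_{\eta})\in\mathbbm{Ch}^{\cC\cD}$ from Sec.~\ref{Prelim:comp-chan} guarantees the transported noise remains admissible. You have correctly identified that this last closure check is the only ingredient beyond Theorem~\ref{Th:rel-comp-chan-CP-div-DM}, exactly as in the paper.
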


\begin{proof}
If both $\cD_1$ and $\cD_2$ are CP divisible then
\begin{align}
\Lambda^1_{t+\delta{t}}&=V^1_{t,t+\delta{t}}\circ\Lambda^1_t;\\
\Lambda^2_{t+\delta{t}}&=V^2_{t,t+\delta{t}}\circ\Lambda^2_t
\end{align}
hold for any $\delta{t}\geq 0$ where $V^1_{t,t+\delta{t}}$ and $V^2_{t,t+\delta{t}}$ are quantum channels.
Now, suppose that $R^{\cC\cD}_C(\Lambda^1_t,\Lambda^2_t)=l$. Therefore, from the definition of incompatibility robustness of quantum channels w.r.t. only the set of completely depolarising channels, we get that there exist two completely depolarising channels $\Theta_1\in\mathbbm{Ch}^{\cC\cD}(\cH,\cK_1)$ and $\Theta_2\in\mathbbm{Ch}^{\cC\cD}(\cH,\cK_2)$ such that the channels $\Sigma^1_t=\frac{\Lambda^1_t+l\Theta_1}{1+l}$ and $\Sigma^2_t=\frac{\Lambda^2_t+l\Theta_2}{1+l}$ are compatible. Now, consider the quantum channels $V^1_{t,t+\delta{t}}\circ\Sigma^1_t=\frac{\Lambda^1_{t+\delta{t}}+l(V^1_{t,t+\delta{t}}\circ\Theta_1)}{1+l}$ and $V^2_{t,t+\delta{t}}\circ\Sigma^2_t=\frac{\Lambda^2_{t+\delta{t}}+l(V^2_{t,t+\delta{t}}\circ\Theta_2)}{1+l}$. Now, as mentioned in Sec. \ref{Prelim:comp-chan}, for two pairs of channels $(\Gamma_1, \Gamma_2)$ and $(\bar{\Gamma}_1, \bar{\Gamma}_2)$, if $\bar{\Gamma}_1\preceq\Gamma_1$ and $\bar{\Gamma}_2\preceq\Gamma_2$ hold, then $\bar{\Gamma}_1$ and $\bar{\Gamma}_2$ are compatible if $\Gamma_1$ and $\Gamma_2$ are compatible \cite[Proposition 3]{Heino-incomp-chan}. Therefore, as $\Sigma^1_t$ and $\Sigma^2_t$ are compatible, we get that the quantum channels $V^1_{t,t+\delta{t}}\circ\Sigma^1_t=\frac{\Lambda^1_{t+\delta{t}}+l(V^1_{t,t+\delta{t}}\circ\Theta_1)}{1+l}$ and $V^2_{t,t+\delta{t}}\circ\Sigma^2_t=\frac{\Lambda^2_{t+\delta{t}}+l(V^2_{t,t+\delta{t}}\circ\Theta_2)}{1+l}$ are also compatible. Therefore, from the definition of incompatibility robustness of quantum channels w.r.t. only the set of completely depolarising channels and the fact that he channels $(V^1_{t,t+\delta{t}}\circ\Theta_1)$ and $(V^2_{t,t+\delta{t}}\circ\Theta_2)$ are completely depolarising, it follows that $R^{\cC\cD}_C(\Lambda^1_t,\Lambda^2_t)=l\geq R^{\cC\cD}_C(\Lambda^1_{t+\delta{t}},\Lambda^2_{t+\delta{t}})$.

\end{proof}

Note that Theorem \ref{Th:rel-comp-chan-CP-div-DM} and Theorem \ref{Th:rel-comp-chan-CP-div-DM-fix-noise} do not directly imply each other. 

\subsection{CP-divisibility and compatibility of quantum measurements} \label{sec:rel-with-comp-meas}
The relation between compatibility of measurements and CP-divisibility of dynamical maps has been studied in detail in Ref. \cite{Heino-open,kiukas2023joint}. Here, we further study it in the context of our present analysis.
In Ref. \cite{Heino-open}, the authors studied the behaviour of incompatibility robustness of quantum measurements with respect to fixed noise measurements. Here, we show that incompatibility robustness of measurements (with respect to generic noise) for any pair of dynamical maps is upper bounded by incompatibility robustness of channels  for that pair of dynamical maps.

\begin{theorem}
    Consider an arbitrary pair of measurements $\cM=\{M_i\in\mathbbm{M}(\Omega_{M_i},\cH)\}_{i\in{1,2}}$ and an arbitrary pair of dynamical maps $\cD_1=\{\Lambda^1_t\}_t\in\mathbbm{DM}(\cH,\cH)$ and $\cD_2=\{\Lambda^2_t\}_t\in\mathbbm{DM}(\cH,\cH)$. Then
\begin{align}
    \max_{M_1,M_2} R_M(\Lambda^{1*}_t(M_1), \Lambda^{2*}_t(M_2))\leq R_C(\Lambda^1_t, \Lambda^2_t).
\end{align}\label{Th:G-RoIM-CP-div}
\end{theorem}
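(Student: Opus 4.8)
The plan is to turn an optimal noisy–compatible decomposition at the level of channels into one at the level of measurements, by pulling the joint channel back through its dual map. Let $r=R_C(\Lambda^1_t,\Lambda^2_t)$. By the definition of the incompatibility robustness of channels, there exist noise channels $\tilde\Phi_1,\tilde\Phi_2\in\mathbbm{Ch}(\cH,\cH)$ and a joint channel $\Psi\in\mathbbm{Ch}(\cH,\cH\otimes\cH)$ satisfying $\tr_{\cK_2}\Psi=(\Lambda^1_t+r\tilde\Phi_1)/(1+r)$ and $\tr_{\cK_1}\Psi=(\Lambda^2_t+r\tilde\Phi_2)/(1+r)$, where $\tr_{\cK_2}$ and $\tr_{\cK_1}$ denote the partial traces over the second and first tensor factor, respectively. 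Fix arbitrary measurements $M_1\in\mathbbm{M}(\Omega_{M_1},\cH)$ and $M_2\in\mathbbm{M}(\Omega_{M_2},\cH)$ acting on the output space; the goal is to produce an explicit joint measurement witnessing that $r$ lies in the compatibility range of the pulled-back pair $(\Lambda^{1*}_t(M_1),\Lambda^{2*}_t(M_2))$.

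First I would define the candidate joint measurement $N=\{N(x,y)\}$ on the input space by $N(x,y)=\Psi^*\!\big(M_1(x)\otimes M_2(y)\big)$. Since $\Psi$ is CPTP, its dual $\Psi^*$ is CP and unital, so each $N(x,y)$ is positive semidefinite and $\sum_{x,y}N(x,y)=\Psi^*(\Id_{\cH}\otimes\Id_{\cH})=\Id_{\cH}$; hence $N\in\mathbbm{M}(\Omega_{M_1}\times\Omega_{M_2},\cH)$ is a genuine measurement.

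The key step is the marginal computation. Summing over $y$ gives $\sum_y N(x,y)=\Psi^*(M_1(x)\otimes\Id)$, and using $\tr[\Psi(T)(M_1(x)\otimes\Id)]=\tr[(\tr_{\cK_2}\Psi(T))\,M_1(x)]$ together with the defining property of the dual, one finds $\Psi^*(M_1(x)\otimes\Id)=[\Lambda^{1*}_t(M_1(x))+r\,\tilde\Phi_1^*(M_1(x))]/(1+r)$, and symmetrically $\sum_x N(x,y)=[\Lambda^{2*}_t(M_2(y))+r\,\tilde\Phi_2^*(M_2(y))]/(1+r)$. Setting $\tilde M_1=\tilde\Phi_1^*(M_1)$ and $\tilde M_2=\tilde\Phi_2^*(M_2)$, which are valid noise measurements because each $\tilde\Phi_i^*$ is CP unital and therefore sends measurements to measurements, these two identities are exactly the constraint system in the definition of the incompatibility robustness of measurements for the pair $(\Lambda^{1*}_t(M_1),\Lambda^{2*}_t(M_2))$ at noise parameter $r$. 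Consequently $r$ belongs to the compatibility range of this measurement pair, so $R_M(\Lambda^{1*}_t(M_1),\Lambda^{2*}_t(M_2))\leq r=R_C(\Lambda^1_t,\Lambda^2_t)$. As $M_1$ and $M_2$ were arbitrary, taking the maximum over them preserves the bound and yields the claim.

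The main obstacle is the marginal computation and its dual bookkeeping: one must verify that partial-tracing the joint channel on the Heisenberg (dual) side corresponds precisely to feeding $M_i(\cdot)\otimes\Id$ into $\Psi^*$, and that the channel-level convex mixture $(\Lambda^i_t+r\tilde\Phi_i)/(1+r)$ dualizes cleanly into the measurement-level mixture $[\Lambda^{i*}_t(M_i)+r\,\tilde\Phi_i^*(M_i)]/(1+r)$. Once this correspondence is established, positivity, normalization, and the admissibility of the noise measurements all follow from $\Psi^*$ and $\tilde\Phi_i^*$ being CP unital, and no further optimization is required since the single value $r$ already exhibits membership in the compatibility range.
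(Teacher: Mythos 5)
Your proposal is correct and follows essentially the same route as the paper's proof: take the optimal noisy compatible decomposition of the channel pair, pull the joint channel back through its dual applied to $M_1(x)\otimes M_2(y)$, and identify the marginals with the noisy measurements $[\Lambda^{i*}_t(M_i)+r\,\tilde\Phi_i^*(M_i)]/(1+r)$. Your version is, if anything, slightly more careful in explicitly verifying that the joint object and the pulled-back noise POVMs are valid measurements via unitality of the dual maps.
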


\begin{proof}
    Let $R_C(\Lambda^1_t, \Lambda^2_t)=l(t)$.  Then there exist two noise channels $\tilde{\Lambda}^1_t$ and $\tilde{\Lambda}^2_t$ such that

    \begin{align}
       \frac{\Lambda^1_t+l(t)\tilde{\Lambda}^1_t}{1+l(t)}=\tr_{\cH_2}\Lambda \label{Eq:L1-comp}\\
       \frac{\Lambda^2_t+l(t)\tilde{\Lambda}^2_t}{1+l(t)}=\tr_{\cH_1}\label{Eq:L2-comp}\Lambda
    \end{align}
    where $\Lambda\in\mathbbm{Ch}(\cH,\cH_1\otimes\cH_2)$ with $\cH_1=\cH_2=\cH$.

Using the definition of dual maps, Eq. \eqref{Eq:L1-comp} and Eq. \eqref{Eq:L2-comp}, we obtain

\begin{align}
    \frac{\Lambda^{1*}_t(M_1(i))+l(t)\tilde{\Lambda}^{1*}_t(M_1(i))}{1+l(t)}=\Lambda^*(M_1(i)\otimes\Id_{\cH_2})\\
    \frac{\Lambda^{2*}_t(M_2(j))+l(t)\tilde{\Lambda}^{2*}_t(M_2(j))}{1+l(t)}=\Lambda^*(\Id_{\cH_1}\otimes M_2(j))
\end{align}
for all $i\in\Omega_{M_1}$ and $j\in\Omega_{M_2}$. Let the measurement $M:=\{M(i,j)=\Lambda^*(M_1(i)\otimes M_2(j))\}$. Then clearly, 
\begin{align}
    \sum_jM(i,j)=\frac{\Lambda^{1*}_t(M_1(i))+l(t)\tilde{\Lambda}^{1*}_t(M_1(i))}{1+l(t)}\\
    \sum_i M(i,j)=\frac{\Lambda^{2*}_t(M_2(j))+l(t)\tilde{\Lambda}^{1*}_t(M_2(j))}{1+l(t)}.
\end{align}

Hence, the measurements $M^{\prime}_1=\{\frac{\Lambda^{1*}_t(M_1(i))+l(t)\tilde{\Lambda}^{1*}_t(M_1(i))}{1+l(t)}\}_{i\in\Omega_{M_1}}$ and $M^{\prime}_2=\{\frac{\Lambda^{2*}_t(M_2(j))+l(t)\tilde{\Lambda}^{2*}_t(M_2(j))}{1+l(t)}\}_{i\in\Omega_{M_2}}$ are compatible. Thus, from the definition of incompatibility robustness of quantum measurements, we get $R_M(\Lambda^{1*}_t(M_1), \Lambda^{2*}_t(M_2))\leq l(t)=R_C(\Lambda^1_t, \Lambda^2_t)$.

\end{proof}

Clearly, from Theorem \ref{Th:rel-comp-chan-CP-div-DM} and Theorem \ref{Th:G-RoIM-CP-div}, it follows that the upper bound of $R_M(\Lambda^{1*}_t(M_1), \Lambda^{2*}_t(M_2))$ is monotonically non-increasing if both $\cD_1$ and $\cD_2$ are CP-divisible.

\subsection{CP-divisibility and compatibility of dynamical maps}\label{sec:comp-DM}

First, we define broadcasting quantum dynamical maps, as follows.

\begin{definition}
A broadcasting quantum dynamical map is a family of CPTP linear maps $\{\Lambda_{t,t_0}:\cL(\cH)\rightarrow\cL(\cH_1\otimes\cH_2)\}_t$ (where $t\geq t_0$) with $\cH_1=\cH_2$.
\end{definition}
We denote the set of all broadcasting dynamical maps on $\cL(\cH)$ (i.e., from $\cL(\cH)$ to $\cL(\cH_1\otimes\cH_2)$) with $\cH_1=\cH_2$ as $\mathbbm{BD}(\cH,\cH_1\otimes\cH_2\mid \cH_1=\cH_2)$.  Again, without the loss of generality, we can take $t_0$ to $0$. Now, we define the compatibility of dynamical maps.
\begin{definition}
Two dynamical map $\cD_1=\{\Lambda^1_{t}\}_t\in\mathbbm{DM}(\cH,\cH)$ and $\cD_2=\{\Lambda^2_{t}\}_t\in\mathbbm{DM}(\cH,\cH)$ are said to be compatible if a joint broadcasting quantum dynamical map $\cJ=\{\Theta:\cL(\cH)\rightarrow\cL(\cH_1\otimes\cH_2)\}_t$ with $\cH_1=\cH_2=\cH$ exists such that
\begin{equation}
\cD_1=\tr_{\cH_2}\cJ;~\cD_2=\tr_{\cH_1}\cJ.
\end{equation}
where $\tr_{\cH_i}\cJ:=\{\tr_{\cH_i}\Theta\}_t$ for $i\in\{1,2\}$ \label{Def:Comp-QDM}.
\end{definition}

Definition \ref{Def:Comp-QDM} is similar to Definition \ref{Def:Comp-QC}, but now compatibility relations should hold for all $t$.

Clearly, the implementation of $\cJ$ is equivalent to the simultaneous implementation of $\cD_1$ and $\cD_2$. Note that the set of all compatible dynamical maps is convex. Therefore, a measure of incompatibility of quantum dynamical maps is the incompatibility robustness of quantum dynamical maps that we define below. The incompatibility robustness of two quantum dynamical maps $\cD_1=\{\Lambda^1_t\}_t\in\mathbbm{DM}(\cH,\cH)$ and $\cD_2=\{\Lambda^2_t\}_t\in\mathbbm{DM}(\cH,\cH)$ can be defined as:
\begin{equation}\label{Def:ROI-dmaps}
\begin{split}
R_{D}(\cD_1,\cD_2)=\min~  & \left. r\right. \\
\text{s.t.} ~ & \left. \frac{\cD_1+r\tilde{\cD}_1}{1+r}=\tr_{\cH_2}{\cJ}\right. \\
& \left. \frac{\cD_2+r\tilde{\cD}_2}{1+r}=\tr_{\cH_1}{\cJ}\right. \\
& \left. \cJ \in\mathbbm{BD}(\cH,\cH_1\otimes\cH_2\mid \cH_1=\cH_2)\right. \\
& \left. \tilde{\cD}_i\in\mathbbm{DM}(\cH,\cH) \quad i = 1,2. \right.
\end{split}
\end{equation}

Here $\tilde{\cD}_i$s are arbitrary noise dynamical maps, and the optimization is over all variables, other than the given pair of dynamical maps $(\cD_1,\cD_2)$.

Let us now, present  the following Lemma.

\begin{lemma}\label{App-lemma}
Consider two quantum channels $\Phi_1\in\mathbbm{Ch}(\cH,\cK_1)$ and $\Phi_2\in\mathbbm{Ch}(\cH,\cK_2)$ with $R_C(\Phi_1,\Phi_2)=l$. Then for all $\epsilon\geq 0$, there exists two quantum channels $\tilde{\Phi}_1\in\mathbbm{Ch}(\cH,\cK_1)$ and $\tilde{\Phi}_2\in\mathbbm{Ch}(\cH,\cK_2)$ such that the quantum channels $\frac{\Phi_1+l^{\prime}\tilde{\Phi}_1}{1+l^{\prime}}$ and $\frac{\Phi_2+l^{\prime}\tilde{\Phi}_2}{1+l^{\prime}}$ are compatible where $l^{\prime}=l+\epsilon \geq l$.
\end{lemma}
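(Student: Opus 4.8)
The plan is to exploit the fact that, at the optimal value $r=l$, the definition of $R_C$ already hands us a \emph{compatible} pair, and then to show that this very same compatible pair can be realized as a noisy pair at any larger level $l'=l+\epsilon$, the extra noise budget being absorbed back into the noise channels. Concretely, since $R_C(\Phi_1,\Phi_2)=l$ and the infimum in \eqref{Def:ROI-channels} is attained, there exist noise channels $\tilde\Phi_1^{(0)}\in\mathbbm{Ch}(\cH,\cK_1)$ and $\tilde\Phi_2^{(0)}\in\mathbbm{Ch}(\cH,\cK_2)$ such that the pair
\begin{equation}
\Sigma_i=\frac{\Phi_i+l\,\tilde\Phi_i^{(0)}}{1+l},\qquad i=1,2,
\end{equation}
is compatible. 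The goal is then to produce channels $\tilde\Phi_1,\tilde\Phi_2$ for which $\tfrac{\Phi_i+l'\tilde\Phi_i}{1+l'}$ is again compatible. For $\epsilon=0$ this is immediate (it is the defining property of $l$), and I would dispose of the degenerate case $l'=0$ (which forces $l=\epsilon=0$, so $\Phi_1,\Phi_2$ are already compatible and any noise works) at the outset; the substance is the case $l'>l>0$.

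For the main case I would define, for $i=1,2$,
\begin{equation}
\tilde\Phi_i=\frac{l'-l}{l'(1+l)}\,\Phi_i+\frac{(1+l')\,l}{l'(1+l)}\,\tilde\Phi_i^{(0)}.
\end{equation}
The two coefficients are nonnegative precisely because $l'\geq l\geq 0$, and a one-line check shows they sum to $1$; hence each $\tilde\Phi_i$ is a convex combination of the channels $\Phi_i$ and $\tilde\Phi_i^{(0)}$, and is therefore itself a legitimate CPTP channel in $\mathbbm{Ch}(\cH,\cK_i)$. This is the only place where the hypothesis $l'\geq l$ enters, and verifying that the constructed $\tilde\Phi_i$ is a genuine channel is the one point requiring care. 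The key algebraic step, verified by direct substitution, is that this choice collapses the level-$l'$ noisy channel back onto $\Sigma_i$:
\begin{equation}
\frac{\Phi_i+l'\tilde\Phi_i}{1+l'}=\frac{\Phi_i+l\,\tilde\Phi_i^{(0)}}{1+l}=\Sigma_i.
\end{equation}
Since $\Sigma_1,\Sigma_2$ are compatible by construction, the pair $\big(\tfrac{\Phi_1+l'\tilde\Phi_1}{1+l'},\tfrac{\Phi_2+l'\tilde\Phi_2}{1+l'}\big)=(\Sigma_1,\Sigma_2)$ is compatible, which is exactly the assertion for $l'=l+\epsilon$.

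I do not expect a genuine obstacle here; the difficulty is conceptual rather than technical, namely hitting on the right construction, and the only nontrivial verification is the CPTP-ness of $\tilde\Phi_i$ (nonnegativity and normalization of the mixing weights). As an alternative route that makes the monotonicity even more transparent, one could instead choose any completely depolarising channels $\Theta_i\in\mathbbm{Ch}^{\cC\cD}(\cH,\cK_i)$ (which form a compatible pair), set $\tilde\Phi_i=\tfrac{l}{l'}\tilde\Phi_i^{(0)}+\tfrac{\epsilon}{l'}\Theta_i$, and invoke convexity of the set of compatible channel pairs to conclude that
\begin{equation}
\frac{\Phi_i+l'\tilde\Phi_i}{1+l'}=\frac{1+l}{1+l'}\,\Sigma_i+\frac{\epsilon}{1+l'}\,\Theta_i
\end{equation}
is compatible, being a convex mixture of the compatible pair $(\Sigma_1,\Sigma_2)$ and the compatible pair $(\Theta_1,\Theta_2)$. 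Either construction yields the claim.
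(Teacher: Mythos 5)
Your proposal is correct, and your \emph{primary} construction is genuinely different from the paper's. The paper starts from the optimal compatible pair $\Sigma_i=\frac{\Phi_i+l\hat\Phi_i}{1+l}$ and then mixes in completely depolarising channels $\hat\Upsilon_{\eta_i}$ with weight $\gamma=\frac{\epsilon}{1+l}$, using the facts that the $\hat\Upsilon_{\eta_i}$ are mutually compatible (via the product joint channel $\hat\Upsilon_{\eta_1\otimes\eta_2}$) and that a convex mixture of compatible pairs is compatible (exhibited explicitly through the joint channel $\frac{\Phi+\gamma\hat\Upsilon_{\eta_1\otimes\eta_2}}{1+\gamma}$); the resulting noise channels are $\tilde\Phi_i=\frac{l\hat\Phi_i+\epsilon\hat\Upsilon_{\eta_i}}{l+\epsilon}$. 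Your second, ``alternative'' route is precisely this argument, so that part coincides with the paper. Your first route instead absorbs the surplus noise budget into the signal itself, taking $\tilde\Phi_i=\frac{l'-l}{l'(1+l)}\Phi_i+\frac{(1+l')l}{l'(1+l)}\tilde\Phi_i^{(0)}$ so that $\frac{\Phi_i+l'\tilde\Phi_i}{1+l'}$ collapses exactly onto $\Sigma_i$; the algebra checks out, the mixing weights are nonnegative exactly when $l'\geq l$, and the degenerate case $l'=0$ is correctly set aside. This version is more elementary: it needs no depolarising channels, no appeal to their compatibility with arbitrary channels, and no convexity of the set of compatible pairs --- only that a convex combination of channels is a channel --- and it makes transparent that the compatibility range is an upward-closed set. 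The paper's route has the minor interpretive advantage of producing a pair that is strictly ``noisier'' than $\Sigma_i$, but both are complete proofs. One small point worth stating explicitly in your write-up: the definition of $R_C$ places no restriction on the noise channels, so letting $\tilde\Phi_i$ contain a component of $\Phi_i$ itself is legitimate.
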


\begin{proof}
As $R_C(\Phi_1,\Phi_2)=l$, there exist $\hat{\Phi}_1\in\mathbbm{Ch}(\cH,\cK_1)$, $\hat{\Phi}_2\in\mathbbm{Ch}(\cH,\cK_2)$ and $\Phi\in\mathbbm{Ch}(\cH,\cK_1\otimes\cK_2)$ such that 

\begin{align}
\bar{\Phi}_1&=\frac{\Phi_1+l\hat{\Phi}_1}{1+l}=\tr_{\cK_2}\Phi\\
\bar{\Phi}_2&=\frac{\Phi_2+l\hat{\Phi}_2}{1+l}=\tr_{\cK_1}\Phi .
\end{align}

Now consider two completely depolarizing channels (i.e., channels with fixed output states) $\hat{\Upsilon}_{\eta_1}\in\mathbbm{Ch}(\cH,\cK_1)$ and $\hat{\Upsilon}_{\eta_2}\in\mathbbm{Ch}(\cH,\cK_2)$ (where $\eta_1\in\cS(\cK_1)$ and $\eta_2\in\cS(\cK_2)$) such that for all $\rho
\in\cS(\cH)$
\begin{align}
\hat{\Upsilon}_{\eta_1}(\rho)&=\eta_1\\
\hat{\Upsilon}_{\eta_2}(\rho)&=\eta_2.
\end{align}

In Ref. \cite[Proposition 10]{Heino-incomp-chan}, it is proved that completely depolarizing channels are compatible with any quantum channel.  Therefore, $\hat{\Upsilon}_{\eta_1}$ and $\hat{\Upsilon}_{\eta_2}$ are compatible.  The corresponding joint channel is $\hat{\Upsilon}_{\eta_1\otimes\eta_2}\in\mathbbm{Ch}(\cH,\cK_1\otimes\cK_2)$ such that for all $\rho\in\cS(\cH)$

\begin{align}
\hat{\Upsilon}_{\eta_1\otimes\eta_2}(\rho)&=\eta_1\otimes\eta_2.
\end{align}
Let $\gamma=\frac{\epsilon}{1+l}$. Clearly, $\gamma\geq 0$.

Now, consider the quantum channels
\begin{align}
\Gamma_1&=\frac{\bar{\Phi}_1+\gamma\hat{\Upsilon}_{\eta_1}}{1+\gamma}=\tr_{\cK_2}\left[\frac{\Phi+\gamma\hat{\Upsilon}_{\eta_1\otimes\eta_2}}{1+\gamma}\right]\\
\Gamma_2&=\frac{\bar{\Phi}_2+\gamma\hat{\Upsilon}_{\eta_2}}{1+\gamma}=\tr_{\cK_1}\left[\frac{\Phi+\gamma\hat{\Upsilon}_{\eta_1\otimes\eta_2}}{1+\gamma}\right].
\end{align}
Clearly, $\Gamma_1$ and $\Gamma_2$ are compatible. Recall that $l^{\prime}=l+\epsilon$. Now, it can be easily shown that
\begin{align}
\Gamma_1&=\frac{\Phi_1+l^{\prime}\tilde{\Phi}_1}{1+l^{\prime}}\\
\Gamma_2&=\frac{\Phi_2+l^{\prime}\tilde{\Phi}_2}{1+l^{\prime}}
\end{align}
where $\tilde{\Phi}_1=\frac{l\hat{\Phi}_1+\epsilon\hat{\Upsilon}_{\eta_1}}{l+\epsilon}$ and $\tilde{\Phi}_2=\frac{l\hat{\Phi}_2+\epsilon\hat{\Upsilon}_{\eta_2}}{l+\epsilon}$ are valid quantum channels. Hence, the lemma is proved.

\end{proof}

Now, defining a quantity $R^{\max}_C(\cD_1,\cD_2):=\max_tR_C(\Lambda^1_t,\Lambda^2_t)$,  we state the following result.

\begin{theorem}
For two arbitrary quantum dynamical maps $\cD_1=\{\Lambda^1_t\}_t\in\mathbbm{DM}(\cH,\cH)$ and $\cD_2=\{\Lambda^2_t\}_t\in\mathbbm{DM}(\cH,\cH)$, the equality $R_D(\cD_1,\cD_2)= R^{\max}_C(\cD_1,\cD_2)$ holds. \label{Th:RMC=RD}
\end{theorem}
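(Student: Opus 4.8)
The plan is to establish the equality by proving the two inequalities $R^{\max}_C(\cD_1,\cD_2)\le R_D(\cD_1,\cD_2)$ and $R_D(\cD_1,\cD_2)\le R^{\max}_C(\cD_1,\cD_2)$ separately. Throughout, I would write $L:=R^{\max}_C(\cD_1,\cD_2)=\max_t R_C(\Lambda^1_t,\Lambda^2_t)$ and $l(t):=R_C(\Lambda^1_t,\Lambda^2_t)$, so that $l(t)\le L$ for every $t$.

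For the first (easier) inequality I would start from a witness of $R_D(\cD_1,\cD_2)=r$: noise dynamical maps $\tilde{\cD}_1=\{\tilde{\Lambda}^1_t\}_t$, $\tilde{\cD}_2=\{\tilde{\Lambda}^2_t\}_t$ and a joint broadcasting dynamical map $\cJ=\{\Theta_t\}_t$ satisfying the marginal relations of Eq. \eqref{Def:ROI-dmaps}. Fixing any $t$, the broadcasting channel $\Theta_t$ is precisely a joint channel in the sense of Definition \ref{Def:Comp-QC} for the pair $\tfrac{\Lambda^1_t+r\tilde{\Lambda}^1_t}{1+r}$ and $\tfrac{\Lambda^2_t+r\tilde{\Lambda}^2_t}{1+r}$, so these two channels are compatible and hence $R_C(\Lambda^1_t,\Lambda^2_t)\le r$ by the definition of the incompatibility robustness of channels. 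As this holds for every $t$, taking the maximum gives $L\le r=R_D(\cD_1,\cD_2)$.

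The reverse inequality is where the real work lies, and I expect it to be the main obstacle: the per-time optimal noise level $l(t)$ generally varies with $t$, whereas $R_D$ demands a single parameter that works at all $t$ simultaneously, together with noise families and a joint family genuinely indexed by $t$. Here Lemma \ref{App-lemma} does the decisive work. For each $t$, since $l(t)\le L$, I would apply the lemma with $\epsilon=L-l(t)\ge 0$ to obtain noise channels $\tilde{\Lambda}^1_t,\tilde{\Lambda}^2_t$ and a joint broadcasting channel $\Theta_t$ making $\tfrac{\Lambda^1_t+L\tilde{\Lambda}^1_t}{1+L}$ and $\tfrac{\Lambda^2_t+L\tilde{\Lambda}^2_t}{1+L}$ compatible; in effect the lemma guarantees that the compatibility range is upward-closed, so the common higher level $L$ is admissible at every time. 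Collecting these pointwise choices into the families $\tilde{\cD}_i=\{\tilde{\Lambda}^i_t\}_t$ and $\cJ=\{\Theta_t\}_t$ yields legitimate objects in $\mathbbm{DM}(\cH,\cH)$ and $\mathbbm{BD}(\cH,\cH_1\otimes\cH_2\mid\cH_1=\cH_2)$, because the definition of a (broadcasting) dynamical map requires only that each member be CPTP and imposes no continuity or divisibility constraint that the $t$-wise assembly could violate. By construction these families satisfy the constraints of Eq. \eqref{Def:ROI-dmaps} with $r=L$ for all $t$, so $L$ belongs to the compatibility range of $R_D$ and thus $R_D(\cD_1,\cD_2)\le L$. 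Combining the two inequalities gives the claimed equality; I would also note that the argument uses only $l(t)\le L$, so it goes through verbatim with $L$ read as $\sup_t l(t)$ should the maximum fail to be attained.
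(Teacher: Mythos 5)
Your proposal is correct and follows essentially the same route as the paper's proof: the lower bound on $R_D$ comes directly from restricting a witness of $R_D$ to a single time, and the upper bound comes from invoking Lemma \ref{App-lemma} to lift each per-time robustness $l(t)$ up to the common level $L=R^{\max}_C$ and assembling the resulting channels into noise and joint dynamical-map families. Your extra remarks (that the $t$-wise assembly is legitimate since no continuity is required, and that the argument survives replacing $\max$ by $\sup$) are sound refinements but do not change the argument.
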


\begin{proof}
 Suppose that $R_D(\cD_1,\cD_2)=l$. Therefore, from Definition \ref{Def:Comp-QDM} and the definition of incompatibility robustness of dynamical maps, there exist two dynamical maps $\bar{\cD}_1=\{\bar{\Lambda}^1_t\}_t$ and $\bar{\cD}_2=\{\bar{\Lambda}^2_t\}_t$ such that the channels $\frac{\Lambda^1_t+l\bar{\Lambda}^1_t}{1+l}$ and $\frac{\Lambda^1_t+l\bar{\Lambda}^1_t}{1+l}$ are compatible for all $t$ \label{Item:state-all-t-comp}. We know that $R^{\max}_C(\cD_1,\cD_2):=\max_tR_C(\Lambda^1_t,\Lambda^2_t)$ and suppose the maximum occurs for $t=t^{\prime}$. Therefore, $R^{\max}_C(\cD_1,\cD_2)=R_C(\Lambda^1_{t^{\prime}},\Lambda^2_{t^{\prime}})=h$ (say). Now, as discussed above, the channels $\frac{\Lambda^1_{t^{\prime}}+l\bar{\Lambda}^1_{t^{\prime}}}{1+l}$ and $\frac{\Lambda^1_{t^{\prime}}+l\bar{\Lambda}^1_{t^{\prime}}}{1+l}$ are compatible\label{Item:state-t-prime-comp}. Then, from the definition of incompatibility robustness for quantum channels, we get
\begin{equation}
R_D(\cD_1,\cD_2)=l\geq R_C(\Lambda^1_{t^{\prime}},\Lambda^2_{t^{\prime}})=R^{\max}_C(\cD_1,\cD_2) \label{Eq:RD>RMC}.
\end{equation}

 Now, suppose $R_C(\Lambda^1_t,\Lambda^2_t)=k_t$ for an arbitrary $t$. Then $k_t\leq h$. Next, from Lemma \ref{App-lemma}, it follows that there exist quantum channels $\tilde{\Lambda}^1_t$ and $\tilde{\Lambda}^2_t$ such that such that $\frac{\Lambda^1_t+h\tilde{\Lambda}^1_t}{1+h}$ and $\frac{\Lambda^2_t+h\tilde{\Lambda}^2_t}{1+h}$ are compatible for all $t$. Therefore, from the definition of incompatibility robustness of the dynamical maps, one obtains
\begin{equation}
R_D(\cD_1,\cD_2)\leq h=R^{\max}_C(\cD_1,\cD_2)\label{Eq:RD<RMC}.
\end{equation}
 Therefore, from inequalities \eqref{Eq:RD>RMC} and \eqref{Eq:RD<RMC}, it
 follows that
\begin{equation}
R_D(\cD_1,\cD_2)= h=R^{\max}_C(\cD_1,\cD_2).
\label{Eq:RD=RMC}
\end{equation}
\end{proof}

From Theorem \ref{Th:rel-comp-chan-CP-div-DM} and Theorem \ref{Th:RMC=RD}, one
can obtain the following result.
\begin{corollary}
For two arbitrary CP-divisible quantum dynamical maps $\cD_1=\{\Lambda^1_t\}_t\in\mathbbm{DM}(\cH,\cH)$ and $\cD_2=\{\Lambda^2_t\}_t\in\mathbbm{DM}(\cH,\cH)$, the equality $R_D(\cD_1,\cD_2)=R_C(\Lambda^1_0,\Lambda^2_0)$ holds.
\end{corollary}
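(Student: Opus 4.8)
The plan is to combine the monotonicity established in Theorem~\ref{Th:rel-comp-chan-CP-div-DM} with the identity of Theorem~\ref{Th:RMC=RD}; these two results do essentially all the work. First I would invoke Theorem~\ref{Th:rel-comp-chan-CP-div-DM}: since both $\cD_1$ and $\cD_2$ are CP-divisible, the function $t\mapsto R_C(\Lambda^1_t,\Lambda^2_t)$ is monotonically non-increasing, because $R_C(\Lambda^1_t,\Lambda^2_t)\geq R_C(\Lambda^1_{t+\delta t},\Lambda^2_{t+\delta t})$ holds for every $t$ and every $\delta t\geq 0$. Specializing to $t=0$ yields $R_C(\Lambda^1_0,\Lambda^2_0)\geq R_C(\Lambda^1_s,\Lambda^2_s)$ for all $s\geq 0$, so the value at the initial time dominates every later value.

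Second, I would read off the maximum. Because $t=0$ lies in the domain (recall that $t_0=0$ without loss of generality) and attains the largest value of $R_C$, the maximum defining $R^{\max}_C$ is attained there, giving $R^{\max}_C(\cD_1,\cD_2):=\max_t R_C(\Lambda^1_t,\Lambda^2_t)=R_C(\Lambda^1_0,\Lambda^2_0)$. Third, I would apply Theorem~\ref{Th:RMC=RD}, which asserts $R_D(\cD_1,\cD_2)=R^{\max}_C(\cD_1,\cD_2)$ for arbitrary pairs of dynamical maps. Chaining the two equalities then delivers $R_D(\cD_1,\cD_2)=R^{\max}_C(\cD_1,\cD_2)=R_C(\Lambda^1_0,\Lambda^2_0)$, which is the claim.

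I do not anticipate a genuine obstacle, since the statement is an immediate consequence of the two cited theorems. The only point deserving a moment of care is confirming that the maximum defining $R^{\max}_C$ is actually \emph{attained} at the initial time rather than merely approached as a supremum: this is guaranteed because the inequality of Theorem~\ref{Th:rel-comp-chan-CP-div-DM} is valid already from $t=0$, so $t=0$ is simultaneously in the domain and optimal. Consequently no continuity or limiting argument is required, and the corollary reduces to substituting one equality into the other.
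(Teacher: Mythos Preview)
Your proposal is correct and matches the paper's own reasoning: the corollary is stated immediately after Theorems~\ref{Th:rel-comp-chan-CP-div-DM} and~\ref{Th:RMC=RD} as a direct consequence of combining them, with no further argument given. Your chain $R_D(\cD_1,\cD_2)=R^{\max}_C(\cD_1,\cD_2)=R_C(\Lambda^1_0,\Lambda^2_0)$ via monotonicity from $t=0$ is exactly what the paper intends.
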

Clearly, this corollary relates CP-indivisibility with incompatibility of channels.

\section{Illustration of incompatibility robustness of quantum channels for several dynamical maps} \label{sec:Behav-RoI-chan-examp-DM}

In this Section, we study the behaviour of incompatibility robustness of quantum channels for some specific dynamical maps, for both CP-divisible and CP-indivisible regime. Our goal is to study if the information backflow induced by CP-indivisibility, can be witnessed from non-monotonic behaviour of incompatibility robustness of quantum channels with respect to time.

To obtain the incompatibility robustness for two quantum channels $\Lambda_1(t)$ and $\Lambda_2(t)$, we implement the following algorithm, incorporating semi-definite optimization techniques. For the examples we consider here, the input Hilbert space ($\cH_{in}$) and the output Hilbert spaces ($\cH_{i,out}$) of both channels are the same as $\mathbbm{C}^d$ (i.e., $d$-dimensional complex Hilbert space for finite $d$), where $d=2$.

\begin{enumerate}[I.]
    \item\label{step1} Fix a value of time $t=0$, and we obtain the Choi matrices, $\cC_{\Lambda_1(t=0)}, \cC_{\Lambda_2(t=0)}$, of the channels, where $\cC_{\Lambda_i(t=0)} \in \mathcal{L}(\cH_{in}\otimes\cH_{i,out})$. We denote  $\cC_{\Lambda_i(t=0)}$ as $\cC_{\Lambda_i}$.
    
    \item\label{step2} We start from the value of $r=0$, and execute the following optimization:
    \begin{eqnarray}\label{eq:alg}
      &&   \max q  \nonumber \\ 
    &&  \cC_{\overline{\Lambda}_1} \in \mathcal{L}(\cH_{in}\otimes\cH_{1,out}), \ \cC_{\overline{\Lambda}_1} \geqslant 0 , \ \tr_{\cH_{1,out}}(\cC_{\overline{\Lambda}_1})=\Id_{\cH_{in}}, \nonumber \\
    &&  \cC_{\overline{\Lambda}_2} \in \mathcal{L}(\cH_{in}\otimes\cH_{2,out}), \ \cC_{\overline{\Lambda}_2} \geqslant 0 ,\ \tr_{\cH_{2,out}}(\cC_{\overline{\Lambda}_2})=\Id_{\cH_{in}}, \nonumber \\
    &&  \cC_{\Gamma} \in \mathcal{L}(\cH_{in}\otimes\cH_{1,out}\otimes\cH_{2,out}), \ \cC_{\Gamma} \geqslant q\mathbbm{1}_{\cH_{in}\otimes\cH_{1,out}\otimes\cH_{2,out}} ,
    \nonumber \\
    && \tr_{\cH_{2,out}}(\cC_{\Gamma}) = \frac{\cC_{\Lambda_1} + r  \cC_{\overline{\Lambda}_1}}{1+r}, \nonumber \\
    &&\tr_{\cH_{1,out}}(\cC_{\Gamma}) = \frac{\cC_{\Lambda_2} + r  \cC_{\overline{\Lambda}_2}}{1+r} . 
    \end{eqnarray}
    \begin{itemize}
        \item If the value of $q$ is negative, it indicates that $\Lambda_1(t=0)$ and $\Lambda_2(t=0)$ are incompatible for that specific value of $r$. In this case, we proceed by repeating step \eqref{step2} with an updated value of $r$, i.e., $r = r + \delta r$.
        \item If the value of $q$ is greater than or equal to zero, it signifies that $\Lambda_1(t=0)$ and $\Lambda_2(t=0)$ have become compatible. We store the current value of $r$ as the incompatibility robustness. Subsequently, we return to step \eqref{step1} and increment the parameter $t$ by $\delta t$.
    \end{itemize}
\end{enumerate}
The incompatibility robustness with respect to completely depolarising (CD) noise can be determined using the same method. In this case, instead of the constraint $\cC_{\overline{\Lambda}_i} \in \mathcal{L}(\cH_{in}\otimes\cH_{i,out}),~\cC_{\overline{\Lambda}_i} \geqslant 0 ,~\tr_{\cH_{i,out}}(\cC_{\overline{\Lambda}_i})=\Id_{\cH_{in}}$, the constraint $\cC_{\overline{\Lambda}_i}=\Id_{\cH_{in}}\otimes\eta_i,\eta_i\in\cL(\cH_{i,out})~,~\eta_i \geqslant 0,~\tr[\eta_i]=1$ needs to be imposed for each $i=1,2$. We refer the readers to the Ref. \cite{github_link} for the codes to generate some of plots that have been discussed below. The codes for the other plots are similar. 

\subsection{Qubit depolarising dynamical maps} 

Consider a qubit depolarizing dynamical map $\cD=\{\Lambda_t\}_t$ of the form
\begin{equation}
    \Lambda_t(\rho)=w(t)\rho+(1-w(t))\frac{\Id}{2}.
\end{equation}
Here, $\Id=\Id_{2\times 2}$. The Choi matrix of $\Lambda_t$ can be written as

\begin{equation}
    \cC_{\Lambda_t}=\begin{bmatrix}
        \frac{1+w(t)}{2}&0&0&w(t)\\
        0&\frac{1-w(t)}{2}&0&0\\
        0&0&\frac{1-w(t)}{2}&0\\
        w(t)&0&0&\frac{1+w(t)}{2}
    \end{bmatrix}.
\end{equation}

Now, let us first consider the CP-divisible scenario, where we can take $w(t)=e^{-\lambda t}$ with $\lambda$ to be some positive real constant and we call the dynamical map $\cD_1=\{\Lambda^1_t\}_t$ (i.e., $\Lambda^1_t(\rho)=e^{-\lambda t}\rho+(1-e^{-\lambda t})\frac{\Id}{2}$). This is a divisible depolarising dynamical map, which can be shown very easily \citep{RHP}.  (For our purpose of study, we take $\lambda$ to be $0.5$. The step size $\delta r$ is taken to be $0.005$. The values of $t$ are taken from $0$ to $1$ (in units of $1/\lambda$), and  the interval of $t$ is taken to be $0.01$ in all the cases).

Note that it can be directly shown from the lindblad evolution of trace distance that the necessary condition for non monotonic behaviour of the trace distance is the breaking down of divisibility \cite{blp1}. The increment of trace distance between any two possible states of a system (evolving through a dynamical map) w.r.t. time is an indication of information backflow from the environment to the system.
    In the following examples studied by us, we use the trace distance curve to show that non-monotonicity of it has similarity with non-monotonicity of the incompatibility robustness curve.
The incompatibility robustness of two copies of $\Lambda^1_t$ (i.e., $R_C(\Lambda^1_t,\Lambda^1_t)$ and $R^{\cC\cD}_{C}(\Lambda^1_t,\Lambda^1_t)$ is plotted w.r.t. time $t$ in Fig \ref{Fig:CP-div-depolarise} and we observe the monotonic behaviour.  Incompatibility of channels becomes permanently zero at approximately $t=0.81$.

\begin{figure}
    \centering
    \includegraphics[scale=0.43]{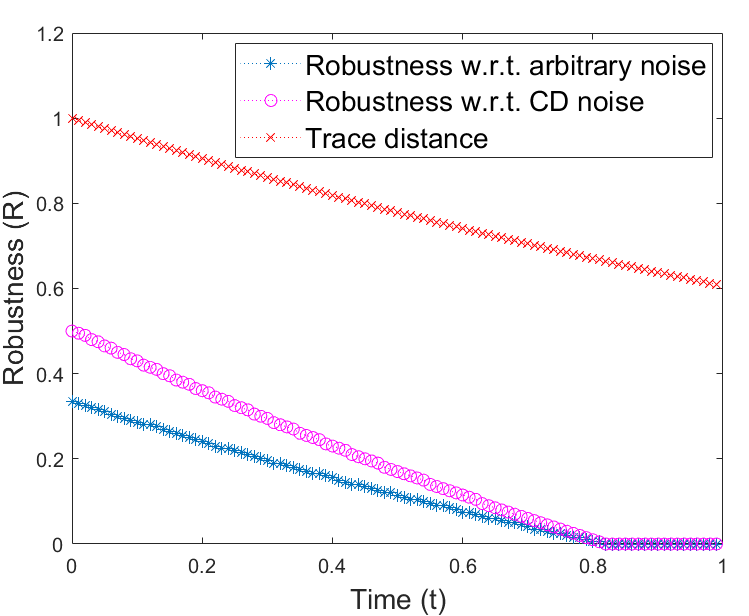}
    \caption{This plot shows the behaviour of incompatibility robustness of quantum channels (w.r.t. completely depolarising (CD) noise and generic noise) for two copies of the depolarising dynamical map $\cD_1$ (with $w(t)=e^{-\lambda t}$ where $\lambda=0.5$) w.r.t. time $t$. It also shows the trace distance between $\Lambda^1_t(\ket{0}\bra{0})$ and $\Lambda^1_t(\ket{1}\bra{1})$. Clearly, there is no information backflow and the behaviour of incompatibility robustness is monotonically non-increasing.}
    \label{Fig:CP-div-depolarise}
\end{figure}

Now, let us take $w(t)=e^{-\lambda t}\cos^2\omega t$, for the example of a CP-indivisible dynamical map. The inclusion of the cosine function imparts  oscillation in the term, allowing information backflow from environment to the system. For our purpose, we take $\lambda = 0.5, \omega = 5 \pi$ and denote the dynamical map as $\cD_2=\{\Lambda^2_t\}_t$ (i.e., $\Lambda^2_t(\rho)=e^{-0.5 t}\cos^2 5 \pi t\rho+(1-e^{-0.5 t}\cos^2 5 \pi t)\frac{\Id}{2}$). In this case, the incompatibility robustness of two copies of $\Lambda^2_t$ (i.e., $R_C(\Lambda^2_t,\Lambda^2_t)$ and $R^{\cC\cD}_{C}(\Lambda^2_t,\Lambda^2_t)$) is plotted w.r.t. time $t$ in Fig \ref{Fig:CP-in-div-depolarise}.  Note that the behaviour of incompatibility of channels is non-monotonic w.r.t time $t$, but permanently becomes zero approximately at $t=0.81$.

\begin{figure}
    \centering
    \includegraphics[scale=0.43]{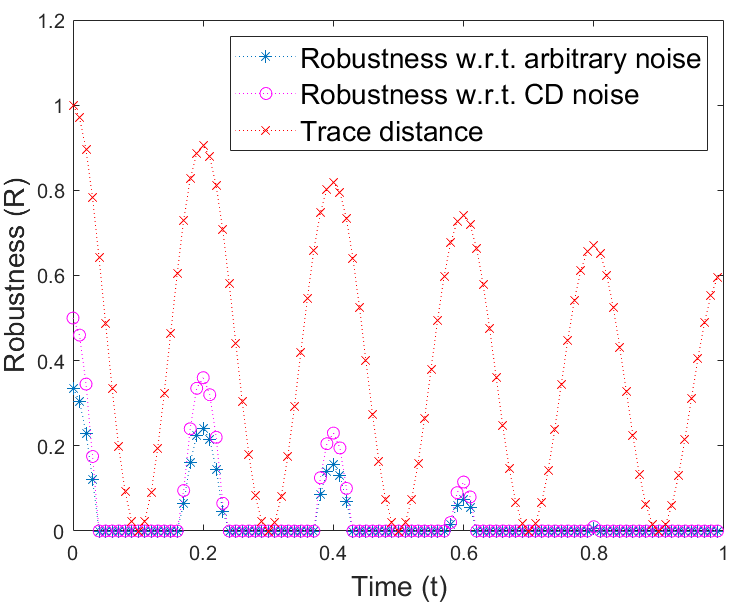}
    \caption{This plot shows the behaviour of incompatibility robustness of quantum channels (w.r.t. completely depolarising (CD) noise and generic noise) for two copies of the depolarising dynamical map $\cD_2$ with $w(t)=e^{-\lambda t}\cos^2\omega t$ where $\lambda=0.5, \omega = 5\pi$  w.r.t. time $t$. It also shows the trace distance between $\Lambda^2_t(\ket{0}\bra{0})$ and $\Lambda^2_t(\ket{1}\bra{1})$. Clearly, there is information backflow that can be witnessed from non-monotonicity of trace distance, and also, the behaviour of incompatibility robustness is non-monotonic.}
    \label{Fig:CP-in-div-depolarise}
\end{figure}

We know that a CP-divisible map does not exhibit information backflow. Therefore, to demonstrate the connection between non-monotonic behaviour of incompatibility robustness and information backflow, we plot incompatibility robustness of quantum channels for the dynamcal maps $\cD_1$ (CP-divisible) and $\cD_2$ (CP-indivisible) i.e., we plot $R_C(\Lambda^1_t,\Lambda^2_t)$ and $R^{\cC\cD}_{C}(\Lambda^1_t,\Lambda^2_t)$ for time $t$ in Fig. \ref{Fig:CP-div-CP-in-div-depolarise}. Here, $\cD_1$ can be considered as a \emph{reference CP-divisible dynamical map} and information backflow of $\cD_2$ can be witnessed through the non-monotonic behaviour of $R_C(\Lambda^1_t,\Lambda^2_t)$ and $R^{\cC\cD}_{C}(\Lambda^1_t,\Lambda^2_t)$ w.r.t. $t$. This is one of the plots that help us to define a measure of CP-indivisibility based on incompatibility of channels in Section \ref{sec:cp-in-meas}.

\begin{figure}
    \centering
    \includegraphics[scale=0.43]{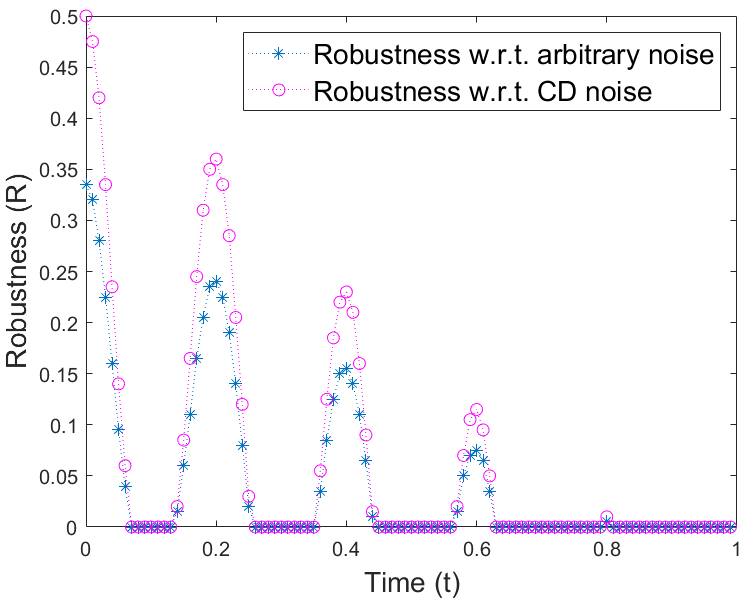}
    \caption{This plot shows the non-monotonic behaviour of incompatibility robustness of channels for dynamical maps $\cD_1$ and $\cD_2$ (i.e., $R_C(\Lambda^1_t,\Lambda^2_t)$ and $R^{\cC\cD}_{C}(\Lambda^1_t,\Lambda^2_t)$) w.r.t. time $t$. Although $\cD_1$ is CP-divisible, $\cD_2$ is CP-indivisible and exhibits information backflow which is the cause of non-monotonic behaviour of incompatibility robustness.}
    \label{Fig:CP-div-CP-in-div-depolarise}
\end{figure}

From Fig. \ref{Fig:CP-in-div-depolarise} and Fig. \ref{Fig:CP-div-CP-in-div-depolarise}, we observe that there are some values of $t$ (between two ripples) where incompatibility robustness remains zero (i.e., non-increasing), but there exists non-monotonic behaviour in trace distance that indicates the information backflow. Therefore, the information backflow can not be witnessed from the graph of the incompatibility robustness of quantum channels for all those values of $t$. But it is possible to eliminate such a limitation if we carefully choose the reference CP-divisible dynamical map. For example, consider $\cD_I=\{\mathbbm{I}_t\}_t$ be the identity dynamical map i.e., $\mathbbm{I}_t=\mathbbm{I}_{\cH_q}$ is an identity channel for all at $t$ where $\cH_q$ is the qubit Hilbert space. Clearly, $\cD_I$ is a CP divisible dynamical map. Let's choose $\cD_I$ as the reference CP-divisible dynamical map and plot the incompatibility robustness of channels for dynamical maps $\cD_I$ and $\cD_2$ (i.e., $R_C(\mathbbm{I}_t,\Lambda^2_t)$ and $R^{\cC\cD}_{C}(\mathbbm{I}_t,\Lambda^2_t)$) w.r.t. time $t$ in Fig. \ref{Fig:id-CP-in-div-depolarise}. From Fig. \ref{Fig:id-CP-in-div-depolarise}, we observe that for all values of $t$ (although, displayed only for finite range of time), there is non-monotonic behaviour of both trace distance  and incompatibility robustness of channels and for an arbitrary time $t$, if the information backflow is non-zero (observed from the trace distance graph) then the incompatibility robustness of quantum channels is strictly increasing. Therefore, for an arbitrary time $t$, if the information backflow is non-zero then it can be witnessed from the graph of incompatibility of quantum channels. Therefore, above-said limitation has been removed by choosing $D_I$ as the reference CP-divisible dynamical map.

\begin{figure}
    \centering
    \includegraphics[scale=0.43]{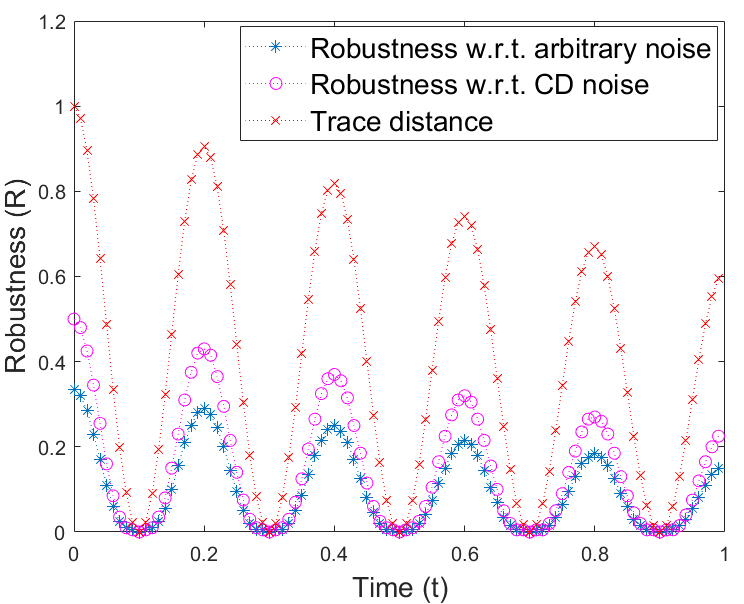}
    \caption{This plot shows the non-monotonic behaviour of incompatibility robustness of channels for dynamical maps $\cD_I$ and $\cD_2$ (i.e., $R_C(\mathbbm{I}_t,\Lambda^2_t)$ and $R^{\cC\cD}_{C}(\mathbbm{I}_t,\Lambda^2_t)$) w.r.t. time $t$. It also shows the trace distance between $\Lambda^2_t(\ket{0}\bra{0})$ and $\Lambda^2_t(\ket{1}\bra{1})$. Clearly, for an arbitrary time $t$ if the information backflow (observed from the trace distance graph) is non zero then the incompatibility robustness of quantum channels is also strictly increasing. Therefore, for an arbitrary time $t$, if the information backflow is non-zero, it can be witnessed from the graph of incompatibility of quantum channels.}
    \label{Fig:id-CP-in-div-depolarise}
\end{figure}

\subsection{Qubit amplitude damping dynamical maps}
In this section, we consider qubit amplitude damping dynamical map $\cD_{ad}=\{\Gamma^{ad}_t\}_t$ where the Choi matrix of $\Gamma^{ad}_t$ can be written as

\begin{equation}
    \cC_{\Gamma^{ad}_t}=\begin{bmatrix}
        1&0&0&\sqrt{1-w(t)}\\
        0&0&0&0\\
        0&0&1-w(t)&0\\
        \sqrt{1-w(t)}&0&0&w(t)
    \end{bmatrix}.
\end{equation}
where $0\leq w(t)\leq 1$. We choose $w(t)$ in such a way that dynamics is CP-indivisible and exhibits information backflow. Taking $w(t)=1-e^{-\alpha t}cos^2\omega t$, we set the value of $\alpha=0.5$ and $\omega=5\pi$. As discussed in previous section (mainly from Fig. \ref{Fig:id-CP-in-div-depolarise}), we observed that the identity dynamical map $\cD_I$ is possibly a suitable reference CP-divisible dynamical map. Therefore, we plot the incompatibility robustness of channels for dynamical maps $\cD_I$ and $\cD_{ad}$ (i.e., $R_C(\mathbbm{I}_t,\Gamma^{ad}_t)$ and $R^{\cC\cD}_{C}(\mathbbm{I}_t,\Gamma^{ad}_t)$) w.r.t. time $t$ in Fig. \ref{Fig:id-CP-in-div-amp-damp} and we observe information backflow as well as non-monotonic behaviour of incompatibility robustness of channels.

\begin{figure}
    \centering
    \includegraphics[scale=0.43]{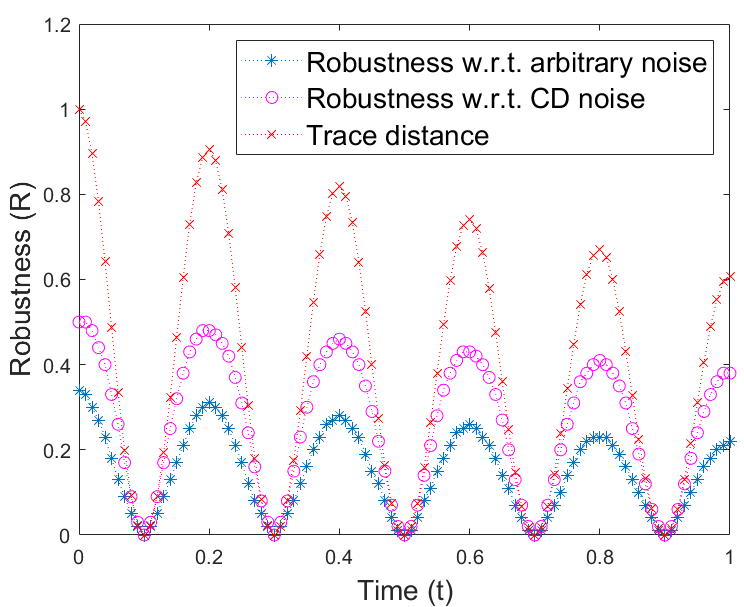}
    \caption{This plot shows the non-monotonic behaviour of incompatibility robustness of channels for dynamical maps $\cD_I$ and $\cD_{ad}$ (i.e., $R_C(\mathbbm{I}_t,\Gamma^{ad}_t)$ and $R^{\cC\cD}_{C}(\mathbbm{I}_t,\Gamma^{ad}_t)$) w.r.t. time $t$. It also shows the trace distance between $\Gamma^{ad}_t(\ket{0}\bra{0})$ and $\Gamma^{ad}_t(\ket{1}\bra{1})$. Clearly, for an arbitrary time $t$ if the information backflow (witnessed from non-monotonicity of trace distance) is non zero then the incompatibility robustness of quantum channels is also strictly increasing and therefore, if the information backflow is non-zero for an any time $t$, it can be witnessed from the graph of incompatibility of quantum channels.}
    \label{Fig:id-CP-in-div-amp-damp}
\end{figure}

 From the above discussion through examples of depolarizing as well as amplitude damping channels (mainly from Fig. \ref{Fig:CP-div-CP-in-div-depolarise}, Fig. \ref{Fig:id-CP-in-div-depolarise} and Fig. \ref{Fig:id-CP-in-div-amp-damp}), we observe the \emph{simultaneous presence} of both information backflow (a signature of CP-indivisibility) and non-monotonicity of incompatibility robustness of channels. Therefore, this observation  motivates us to define a measure of CP-indivisibility based on incompatibility robustness of channels. We will define such a measure in Sec. \ref{sec:cp-in-meas}.

 \subsection{CP-indivisible dynamical maps without information backflow}

 Although information backflow which may be measured using trace distance is a signature of CP-indivisibility, it is not equivalent to CP-indivisibility \cite{Rivas_P_divisible, Hall_Eternal-non-Markovian}. In order to illustrate this point, let us consider the Choi matrix of a dynamical map $\cD_{et}=\{\Lambda^{et}_t\}_t$, given by 
 \begin{equation}
    \cC_{\Lambda^{et}_t}=\begin{bmatrix}
        A(t)&0&0&B(t)\\
        0&1-A(t)&0&0\\
        0&0&1-A(t)&0\\
        B(t)&0&0&A(t)
    \end{bmatrix}.
\end{equation}
where $A(t)=\frac{1+e^{-2t}}{2}$ and $B(t)=e^{-\int^t_0(1-tanh x)dx}$. Such a dynamical map is CP-indivisible, but does not show information backflow (i.e., the trace distance is monotonically non-increasing w.r.t. time $t$) \cite{Hall_Eternal-non-Markovian}. 

\begin{figure}
    \centering
    \includegraphics[scale=0.43]{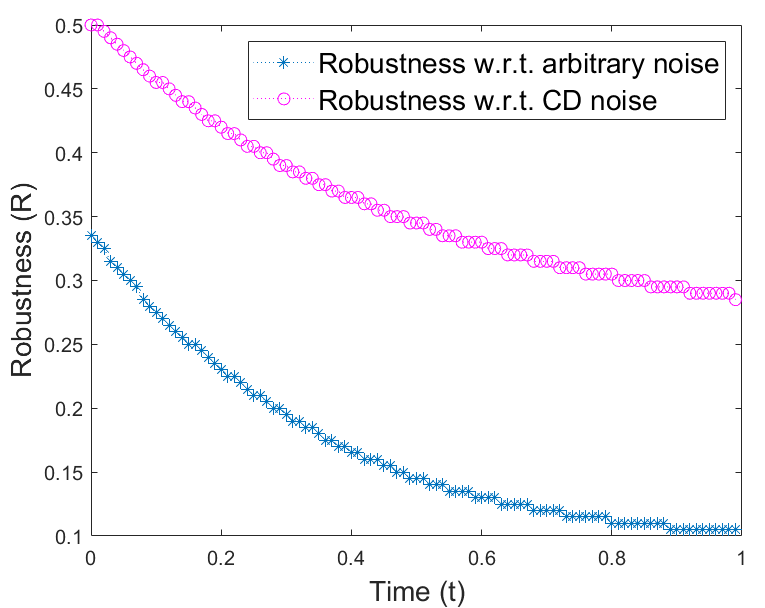}
    \caption{This plot shows monotonic behaviour of incompatibility robustness of channels for dynamical maps $\cD_I$ and $\cD_{et}$ (i.e., $R_C(\mathbbm{I}_t,\Lambda^{et}_t)$ and $R^{\cC\cD}_{C}(\mathbbm{I}_t,\Lambda^{et}_t)$) w.r.t. time $t$.}
    \label{Fig:id-et-dm}
\end{figure}

For the above dynamics we plot the robustness with respect to both arbitrary and depolarizing noise in Fig. \ref{Fig:id-et-dm}. No non-monotonicity indicative of information backflow is displayed.  Although, we did not find any non-monotonic behaviour of incompatibility robustness of quantum channels w.r.t. time $t$, it is a matter of further investigation to conclude whether it is possible for non-monotonic behaviour to
be revealed for any other choice of the reference CP-divisible map, instead of the identity dynamical map.

\section{CP-indivisibility and incompatibility of channels as
resource for quantum teleportation}\label{sec:telport-cp-in-divisible-incomp}

Quantum teleportation is a very well-known quantum communicaion protocol \cite{bennett-teleport,Popescu_teleport}. Although perfect quantum teleportation can be performed using maximally entangled states, it can be performed imperfectly with mixed non-maximal entangled states, in general. A criterion for a two-qubit state to be useful in quantum teleportation is provided in Ref. \cite{horodecki_teleportation}. Consider a two-qubit state $\rho_{AB}$. Define a $3\times 3$ matrix $S_{\rho_{AB}}=[s(\rho_{AB})_{ij}]$ such that $s(\rho_{AB})_{ij}=\tr[\rho_{AB}(\sigma_i\otimes\sigma_j)]~\forall i,j\in\{x,y,z\}$ where $\sigma_j$s are Pauli matrices. Let, $N(\rho_{AB})=Tr[\sqrt{S_{\rho_{AB}}^{\dagger}S_{\rho_{AB}}}]$. A quantum state is useful for  teleportation  iff $N(\rho_{AB})>1$ and in this case, the maximum fidelity is $F_{max} \equiv \frac{1}{2}[1+\frac{1}{3}N(\rho)] > 2/3 $\cite{horodecki_teleportation}.

Now, consider the two-qubit maximally entangled state $\ket{\Psi_{-}}=\frac{1}{\sqrt{2}}[\ket{01}-\ket{10}]$ where $\{\ket{0}, \ket{1}\}$ are the eigen basis of $\sigma_z$. Let us take the dynamical map $\cD_2=\{\Lambda^2_t\}_t$ from Section \ref{sec:Behav-RoI-chan-examp-DM}. On application of $\Lambda^2_t$ on the state $\ket{\Psi_{-}}\bra{\Psi_{-}}$, we obtain
\begin{align}
    \rho^{\prime}_{AB}&=(\cI_A\otimes\Lambda^2_t)(\rho_{AB})\nonumber\\
    &=w(t)\ket{\Psi_{-}}\bra{\Psi_{-}}+(1-w(t))\frac{\Id_{4\times 4}}{4}
\end{align}
where $w(t)=e^{-\lambda t}cos^2\omega t$ with $\lambda=0.5$ and $\omega=5\pi$. Now,
 \begin{align}
     S_{\rho^{\prime}_{AB}}=\begin{bmatrix}
         -w(t)& 0& 0\\
         0& -w(t)& 0\\
         0& 0& -w(t)
     \end{bmatrix}.
 \end{align}
It follows that $N(\rho^{\prime}_{AB})=3w(t)$ and therefore, $F_{max}=\frac{1}{2}[1+\frac{1}{3}w(t)]$ for $N(\rho^{\prime}_{AB})>1$ and otherwise, $F_{max}=\frac{2}{3}$.

\begin{figure}
    \centering
    \includegraphics[scale=0.43]{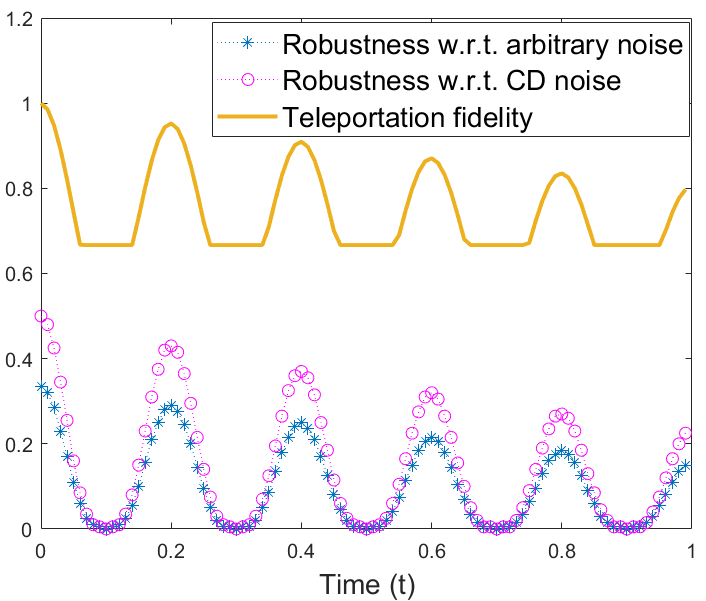}
    \caption{This plot shows the maximum fidelity for teleportation using the state $\rho^{\prime}_{AB}=(\cI_A\otimes\Lambda^2_t)(\rho_{AB})$ and the non-monotonic behaviour of incompatibility robustness of channels for dynamical maps $\cD_I$ and $\cD_{2}$ (i.e., $R_C(\mathbbm{I}_t,\Lambda^{2}_t)$ and $R^{\cC\cD}_{C}(\mathbbm{I}_t,\Lambda^{2}_t)$) w.r.t. time $t$. Clearly, the teleportation fidelity increases with a corresponding increase in incompatibility robustness.}
    \label{Fig:teleport-incomp}
\end{figure}

In Fig. \ref{Fig:teleport-incomp} we plot the robustness measure and teleportation fidelity versus time.
From the figure we observe that the teleportation fidelity rises with increase of the incompatibility robustness  which is a signature of CP-indivisibility. 
This clearly establishes CP-indivisibility is a resource for quantum teleportation.
 
\section{Measuring CP-indivisibility using incompatibility robustness of quantum channels} \label{sec:cp-in-meas}
In this section, we propose a measure of CP-indivisibility based on incompatibility robustness of quantum channels. As we can see from the study in Sec. \ref{sec:Behav-RoI-chan-examp-DM}, the incompatibility robustness of channels shows the signature of CP-indivisibility induced information backflow through non-monotonic behaviour. Therefore, it is evident that an information backflow measure based measure of CP-indivisibility \citep{blp1} can be constructed using the incompatibility robustness of channels. Here, we follow the procedure proposed by Laine et.al \citep{blp1} to construct such a measure of CP-indivisibility.

Consider a dynamical map $\cD=\{\Lambda_t\}_t\in\mathbbm{DM}(\cH,\cH)$. We have to construct a CP-indivisibility measure of $\cD$.  For this, take an arbitrary CP-divisible dynamical map $\bar{\cD}=\{\bar{\Lambda}_t\}_t\in\mathbbm{DM}(\cH,\cH)$ that acts as a reference CP-divisible dynamical map. Let $\theta(t):=\frac{d R_C(\Lambda_t,\bar{\Lambda}_t)}{d t}$. Then, we define the CP-indivisibility measure $\cN(\cD)$ as
\begin{equation}
    \cN(\cD)=\sup_{\bar{\cD}\in\mathbbm{DM}(\cH,\cH)_{\cC\cP}}\int_{\theta(t)>0}R_C(\Lambda_t,\bar{\Lambda}_t)dt
\end{equation}
where $\mathbbm{DM}(\cH,\cH)_{\cC\cP}$ is the set of all CP-divisible dynamical map acting from $\cL(\cH)$ to $\cL(\cH)$. Clearly,  $\cN(\cD)\geq 0$.
We obtain from Theorem \ref{Th:rel-comp-chan-CP-div-DM} that this measure is always \emph{zero} for CP-divisible dynamical maps (as $\theta(t)\leq 0$ for those dynamical maps). But non-monotonic behaviour of incompatibility robustness of channels in Fig. \ref{Fig:CP-div-CP-in-div-depolarise} and Fig. \ref{Fig:id-CP-in-div-depolarise} suggests that that the proposed measure is non-zero for the CP-indivisible depolarising dynamical map $\cD_2$. Similarly, non-monotonic behaviour of incompatibility robustness of channels in Fig \ref{Fig:id-CP-in-div-amp-damp} suggests that this measure is non-zero for the CP-indivisible amplitude damping dynamical map $\cD_{ad}$. Therefore, in short, there exists CP-indivisible dynamical maps for which this measure is non-zero. It is evident that as the expression of $  \cN(\cD)$ consists of integration with respect to time $t$ over the range from $0$ to $\infty$, this measure $  \cN(\cD)$ can take an arbitrary value and it is not normalised. For the sake of compactness, we can also propose a normalised measure of the form 
\begin{equation}
    \mathfrak{N}(\cD)=\frac{  \cN(\cD)}{1+  \cN(\cD)}
\end{equation}
Clearly, for any dynamical map $\cD$, $0\leq \mathfrak{N}(\cD)\leq 1$ and $\mathfrak{N}(\cD)=0$ for CP-divisible dynamical maps (as $\cN(\cD)=0$ for CP-divisible dynamical maps) and $\mathfrak{N}(\cD)$ is \emph{non-zero} whenever $\cN(\cD)$ is non-zero.
Our proposed measure is on a similar footing of the information backflow based non-Markovianity measure  \citep{blp1}.

\section{conclusions}\label{sec:conc}

To summarize, in this work, we have considered two important features of quantum theory, {\it viz.} CP-indivisibility and incompatibility of channels, that arise naturally in several practical situations of quantum dynamics.
These two properties have been utilized as resources in several quantum information processing protocols. Our present analysis enables the characterization of CP-indivisibility of dynamical maps using incompatibility of channels. We have shown that incompatibility robustness of channels for two CP-divisible dynamical maps is monotonically non-increasing with respect to time. We have shown that for two dynamical maps and for a particular time $t$, the incompatibility robustness of quantum measurements, is upper bounded by the incompatibility robustness of quantum channel.

Furthermore, we have explicitly analyzed the case of qubit depolarising dynamical maps and qubit amplitude damping dynamical maps as examples. We have shown that in the case of the CP-divisible regime, incompatibility robustness of channels is monotonically non-increasing with respect to time. But in the case of a CP-indivisible regime, it loses its monotonic behaviour in both cases. The examples studied here clearly illustrate the  simultaneous presence of information backflow from the environment to the system, as reflected by the non-monotonic behaviour of the trace distance and non-monotonic behaviour of the incompatibility robustness of quantum channels for both generic and completely depolarising noise models. We have further shown through an example
how information backflow acts as a resource for quantum teleportation. Additionally, we have proposed a measure of CP-indivisibility based on incompatibility robustness of quantum channels.

The results obtained from our present study motivate certain directions of future research. It may be worthwhile to explore whether the incompatibility robustness of quantum channels can be used to witness CP-indivisible maps, 
such as the one studied in Section IV C, that do not show information backflow \cite{PhysRevA.89.042120}. Moreover, it would also be interesting to investigate whether the CP-indivisibility measure proposed in Sec. \ref{sec:cp-in-meas}, can be useful to quantify the performance of some specific information-theoretic or thermodynamic tasks.

{\it Acknowledgements:}

ASM acknowledges support from the DST project DST/ICPS/QuEST/2019/Q98.

\bibliography{reference}

\end{document}